\definecolor{ForestGreen}{rgb}{0.1333,0.5451,0.1333}
\definecolor{Red}{rgb}{0.9,0,0}
\definecolor{DarkRed}{rgb}{0.1,0.1,0.9}
\definecolor{DarkBlue}{rgb}{0.1,0.1,0.5}
\theoremstyle{plain}
\newtheorem{theorem}{Theorem}[section]
\newtheorem{lemma}[theorem]{Lemma}
\newtheorem{corollary}[theorem]{Corollary}
\theoremstyle{definition}
\newtheorem{definition}[theorem]{Definition}
\newtheorem{claim}[theorem]{Claim}
\newtheorem{fact}[theorem]{Fact}
\theoremstyle{remark}
\theoremstyle{definition}
\newmdtheoremenv[style=problemstyle]{problem}{Problem}
\newtheorem{mdalg}{Algorithm}
\newtheorem{mdimp}{Implementation}
\renewcommand{\qed}{\nobreak \ifvmode \relax \else
	\ifdim\lastskip<1.5em \hskip-\lastskip
	\hskip1.5em plus0em minus0.5em \fi \nobreak
	\vrule height0.75em width0.5em depth0.25em\fi}
\newcommand{\Qed}[1]{\ensuremath{\qed_{\textnormal{~#1}}}}
\newcommand{\Ot}{\ensuremath{\widetilde{O}}}
\newcommand{\eps}{\ensuremath{\varepsilon}}
\newcommand{\norm}[1]{\ensuremath{\|#1\|}}
\newcommand{\floor}[1]{{\left\lfloor{#1}\right\rfloor}}
\newcommand{\poly}{\mbox{\rm poly}}
\DeclareMathOperator*{\Prob}{\ensuremath{\textnormal{Pr}}}
\renewcommand{\Pr}{\Prob}
\newenvironment{tbox}{\begin{tcolorbox}[
		enlarge top by=5pt,
		enlarge bottom by=5pt,
		breakable,
		boxsep=0pt,
		left=4pt,
		right=4pt,
		top=10pt,
		arc=0pt,
		boxrule=1pt,toprule=1pt,
		colback=white
		]
	}
	{\end{tcolorbox}}
\newcommand{\supp}[1]{\ensuremath{\textnormal{\text{supp}}(#1)}}
\newcommand{\II}{\ensuremath{\mathbb{I}}}
\newcommand{\mireal}[1][]{
	\ifx\relax#1\relax%
	\II(\mione \,; \mitwo)%
	\else%
	\II(\mione \,; \mitwo\mid #1)%
	\fi
}
\crefname{algocfline}{alg.}{algs.}
\Crefname{algocfline}{Algorithm}{Algorithms}
\newcommand{\E}{\mathop{\mathbb{E}}}
\newcommand{\veca}{\ensuremath{\mathbf{a}}}
\newcommand{\vecb}{\ensuremath{\mathbf{b}}}
\newcommand{\vecm}{\ensuremath{\mathbf{m}}}
\newcommand{\vecq}{\ensuremath{\mathbf{q}}}
\newcommand{\vecu}{\ensuremath{\mathbf{u}}}
\newcommand{\vecv}{\ensuremath{\mathbf{v}}}
\newcommand{\vecx}{\ensuremath{\mathbf{x}}}
\newcommand{\vecy}{\ensuremath{\mathbf{y}}}
\newcommand{\vecz}{\ensuremath{\mathbf{z}}}
\let\oldnl\nl
\newcommand{\nonl}{\renewcommand{\nl}{\let\nl\oldnl}}
\newcommand{\hypercube}{\{0,1\}^d}
\newcommand{\unitsphere}{S^{d-1}}
\newcommand{\hidden}{H}
\newcommand{\disth}
{\ensuremath{\text{dist}}}
\newcommand{\wt}{\text{wt}}
\newcommand{\query}[1]{\textnormal{query}(#1)}
\newcommand{\conv}{\textsc{Conv}}
\newcommand{\chull}[1]{\conv(#1)}
\newcommand{\spn}[1]{\text{Span}(#1)}
\newcommand{\support}{\text{Supp}}
\newcommand{\hgm}{\textup{Hypergeometric}}
\newcommand{\R}{\mathbb{R}}
\newcommand{\negate}[2]{#1_{\sim #2}}
\DeclareMathOperator*{\argmin}{argmin}
\newcommand{\deter}{\textsc{Det}}
\title{Learning Multiple Secrets in Mastermind}
\author{
Milind Prabhu \thanks{University of Michigan,  \url{milindpr@umich.edu}} \and
David Woodruff \thanks{Carnegie Mellon University, \url{dwoodruf@cs.cmu.edu}}
}
\begin{document}
\maketitle

\begin{abstract}

 In the Generalized Mastermind problem, there is an unknown subset $H$ of the hypercube $\{0,1\}^d$ containing $n$ points. The goal is to learn $H$ by making a few queries to an oracle, which, given a point  $\vecq$ in $\{0,1\}^d$, returns the point in $H$ nearest to $\vecq$. We give a two-round adaptive algorithm for this problem that learns $H$ while making at most $\exp(\widetilde{O}(\sqrt{d \log n}))$ queries. Furthermore, we show that any $r$-round adaptive randomized algorithm that learns $H$ with constant probability must make $\exp(\Omega(d^{3^{-(r-1)}}))$ queries even when the input has $\poly(d)$ points; thus, any $\poly(d)$ query algorithm must necessarily use $\Omega(\log \log d)$ rounds of adaptivity. We give optimal query complexity bounds for the variant of the problem where queries are allowed to be from $\{0,1,2\}^d$. We also study a continuous variant of the problem in which $\hidden$ is a subset of unit vectors in $\R^d$, and one can query unit vectors in $\R^d$. For this setting, we give an  $O(n^{\floor{d/2}})$ query deterministic algorithm to learn the hidden set of points.
\end{abstract}

\section{Introduction}
\label{sec: intro}
Mastermind is a classic codebreaking board game that originated in 1970 and has inspired several lines of research in theoretical computer science over the past few decades. The game is played by two players: a codemaker and a codebreaker. The codemaker chooses a secret sequence of four colored code pegs, each with one of six colors; the codebreaker who wishes to determine the sequence is then allowed to guess various four-peg sequences. For each sequence the codebreaker guesses, they learn the number of pegs in their guess that are of the correct color and appear in the correct position, as well as the number of pegs that are of the correct color but are in the wrong position. The codebreaker wishes to learn the sequence while making as few guesses as possible. 

Knuth \cite{Knuth1977TheCA} showed that the original Mastermind game has an optimal strategy that uses $5$ queries in the worst case. The generalization of the Mastermind game to $n$ pegs and $k$ colors was studied by Chv\'atal \cite{chvatal1983mastermind}, who gave an asymptotically optimal when $k \leq n^{1-\eps}$. The regime when $k\geq n$ was less well understood until recently.  For the specific case when $k = n$, Chv\'atal gave a strategy with $O(n \log n)$ query complexity, and several subsequent works \cite{chen1996finding, goodrich2009algorithmic,jager2009number} gave strategies with improved leading constants. The first asymptotic improvement in the query complexity to $O(n \log \log n)$ queries was obtained by  Doerr et al. \cite{doerr2016playing}. Recently, Martinsson and Su \cite{martinsson2024mastermind} obtained an $O(n)$ query strategy, thus matching the information-theoretic lower bound up to constant factors.

\subsection{Mastermind With Multiple Secrets}
In this paper, we introduce and study variants of the Mastermind problem in which the codemaker chooses not one hidden point but a set $\hidden$ of $n$ hidden points.  The codebreaker operates under a certain query model and has to learn \emph{all} the points in $\hidden$ while minimizing the number of queries they make. Specifically, we consider three variants of this problem, each defined by the space from which $\hidden$ is chosen and the query model under which the codebreaker operates.  In the first variant, we consider the setting of the problem when the hidden points lie in the hypercube.

\begin{problem}\label{prob: hamming-vanilla}
The hidden set $\hidden$ is a subset of the hypercube $\{0,1\}^d$ of size $n$. The codebreaker is allowed to query $\vecq \in \{0,1\}^d$ of their choice in response to which they learn $\argmin_{\vecx \in \hidden} \disth(\vecx, \vecq)$, i.e., the point in $\hidden$ with the least Hamming distance to $\vecq$.
\end{problem}
Next, we consider a continuous variant of the problem.
\begin{problem}\label{prob: unitsphere}
The hidden set $\hidden$ is a subset of the unit sphere $\unitsphere$ of size $n$. The codebreaker is allowed to query  $\vecq \in \unitsphere$ of their choice in response to which they learn $\argmin_{\vecx \in \hidden} \norm{\vecx - \vecq}_2$, i.e., the point in $\hidden$ that has the least Euclidean distance to $\vecq$.
\end{problem}
We also study the hypercube version of the problem with a stronger query model, in which the codebreaker has access to queries with an extra character.

\begin{problem}\label{prob: strong-query}
The hidden set $\hidden$ is a subset of the hypercube $\{0,1\}^d$ of size $n$. The codebreaker is allowed to query $\vecq \in \{0,1,2\}^d$ of their choice in response to which they learn $\min_{\vecx \in \hidden} \disth(\vecx, \vecq)$, i.e., the Hamming distance of $\vecq$ to the point in $\hidden$ nearest to it.
\end{problem}

One motivation for studying such a query model with extra characters is the work of \cite{hu2022recovery}. They study the problem in which a secret binary string in $\{0,1\}^d$ has to be recovered via access to distance oracles for non-decomposable distance metrics such as edit distance, $(p)$-Dynamic Time Warping, and Fr\'echet distances.  They showed that recovery is provably impossible for the exact recovery problem under some distance metrics such as Dynamic Time Warping unless extra characters are allowed in the queries. For recovery using the edit distance oracle, they gave algorithms with better query complexity when queries were allowed to have extra characters. The question of what advantage allowing extra characters gives us in our setting is also of interest.

While our main goal is to design query-efficient algorithms for Problems 1-3, we are also interested in the \textit{adaptivity} of algorithms. The adaptivity of an algorithm is the number of rounds over which the algorithm makes queries and is a measure of its parallelizability. Formally, we have the following definition:

\begin{definition}[$r$-Round Adaptive Algorithm] An algorithm is said to be $r$-round adaptive if it makes queries in $r$ batches $B_1$, \ldots, $B_r$ such that for each $j \in [r]\setminus\{1\}$, the queries made by the algorithm in the $j$-th batch $B_j$ are a function only of the responses to queries in the batches $B_1, \ldots ,B_{j-1}$.
\end{definition}

While it would be ideal to design algorithms that are both query efficient and use few rounds of adaptivity, we show that such algorithms do not exist for  \Cref{prob: hamming-vanilla}. Thus, our primary goal is to understand the tradeoff between the query complexity and adaptivity of algorithms for these problems.

\subsection{Motivation}
One strong motivation for studying these problems is that they seem to be fundamental algorithmic questions and have not been studied previously to the best of our knowledge. Furthermore, several similar questions have interested the community (see \Cref{sec: related}). We now list some areas where the proposed variants of the mastermind problem have applications or are relevant.

\paragraph{Data Reconstruction from a Nearest Neighbor Data Structure.} An obvious application of the generalized mastermind problem is the efficient recovery of data given access to its nearest neighbor data structure. In particular,  a better understanding of these problems may help prevent adversarial attacks from learning private data using a nearest neighbor data structure. 

 \paragraph{Optimization. } 

 \Cref{prob: hamming-vanilla} can be formulated as the algorithmic task of recovering all global minima of the function $f(x) = \disth(x,H)$ whose input is a point $x \in \{0,1\}^d$ and output is the hamming distance of $x$ to the hidden set $H$. Our lower bounds show that recovering all global minima, even in such a structured and simple setting, is hard. 
 
 The key difficulty in designing algorithms for \Cref{prob: hamming-vanilla} stems from a phenomenon akin to ``getting trapped in local minima'' that arises in optimization. In the lower-bound for \Cref{prob: hamming-vanilla} (i.e., \Cref{thm: final-r-round-lower-bound}), the hardness arises from a small set of hidden points $\hidden' \subset H$  preventing the remaining points $\hidden \setminus \hidden'$ from being discovered. More precisely, $H'$ has the property that a uniformly random point from $\{0,1\}^d$ is closer to some point in it than to any point in $\hidden \setminus \hidden'$ with high probability. Therefore, the algorithm has to actively evade points in $\hidden'$ to learn new points. 

     \paragraph{Adverserially Robust Learning. } Deep learning methods, in general, are not robust to adversarial examples \cite{zhang2020adversarial, 9013065, 8294186}. Recent developments such as \cite{lecuyer2019certified, cohen2019certified} have led to methods achieving provable robustness with respect to $\ell_p$-norm perturbations for continuous domains. A standard procedure to adapt these ideas for discrete domains is to pre-process the inputs by first mapping them to $\ell_p$ space before classification. As noted by \cite{hu2022recovery}, efficient non-adaptive algorithms for the exact recovery problems can be used to do exactly this. For instance, a non-adaptive algorithm for the generalized Mastermind problem can be used to losslessly map the input to a $t$-dimensional vector of responses to the $t$ queries made by the algorithm.

\subsection{Our Contributions}
 For \Cref{prob: hamming-vanilla}, we present a two-round adaptive deterministic algorithm which makes $2^{O(\sqrt{d \cdot \log d \cdot \log n})}$ queries to retrieve $\hidden$. On the negative side, we show that any $r$-round adaptive algorithm must make $\exp(\Omega(d^{3^{-(r-1)}}))$ queries even when the input has $O(d^3)$ points. This implies that algorithms with $\poly(d)$ query complexity must use $\Omega(\log \log d)$ rounds of adaptivity. Also, for the special case of non-adaptive algorithms, this corresponds to an $\exp(\Omega(d))$ query lower bound, and for the case of two-round adaptive algorithms, a lower bound of $\exp(\Omega(d^{1/3}))$ queries. For \Cref{prob: unitsphere}, we propose a deterministic algorithm in \Cref{sec: rd-algorithm}. The algorithm makes $O(n^{\floor{d/2}})$ queries over $O(n+d)$ adaptive rounds. Finally, we give a deterministic algorithm for \Cref{prob: strong-query} that makes $O(nd)$ queries in \Cref{sec: strong-query}. This shows that allowing extra characters in the queries of \Cref{prob: hamming-vanilla} makes the problem much easier. A simple information-theoretic lower bound shows that this is essentially optimal; any randomized algorithm that succeeds with constant probability must make $\Omega(\frac{nd}{\log d})$ queries.
 
\subsection{Related Work}\label{sec: related}
Past work has studied several variants of the mastermind problem, sometimes under different names. The work of \cite{fernandez2019query} studied the Mastermind problem in which  the codemaker picks $\vecx \in \{-k, -k + 1, \cdots, k-1, k\}^d$ and the codebreaker tries to infer $\vecx$ by making $\ell_p$ distance queries in which they pick $\vecy \in \{-k, -k + 1, \cdots, k-1, k\}^d$ of their choice and are told $\norm{\vecx - \vecy}_p$. Their techniques give non-adaptive algorithms for separable distance metrics (which are metrics that decompose into coordinate sums) such as $\ell_p$-norms, smooth max, Huber loss, etc. The study of the Mastermind problem with non-separable distance metrics such as edit distance, $p$-Dynamic Time Warping, and Fr\'echet distances was initiated by \cite{hu2022recovery}. Inference problems involving graph metrics have been considered in \cite{jiang2019metric, rodriguez2014strong}.  Permutation-based variants of Mastermind are studied in \cite{afshani2019query,el2020exact}.

The well-known coin-weighing problem is equivalent to the original Mastermind problem with two colors \cite{548511}. In this problem, given $n$ coins, each of whose weight is either $w_0$ or $w_1$, the goal is to determine the weight of each coin in a small number of weighings using a spring balance. Optimal bounds for this problem were achieved by \cite{bshouty2009optimal}. The problem of group testing, introduced by \cite{dorfman1943detection}, is also closely related. In this problem, there is a group of people out of which a disease has infected some. There is a testing procedure that, when queried with a subset of people, can determine either that there is someone in the subset who is infected or that no one is. The goal of the problem is to identify the infected population using the testing procedure as few times as possible. The work of \cite{coja2020optimal} obtains optimal non-adaptive and two-round adaptive group testing algorithms. Several other results for group testing are surveyed in \cite{aldridge2019group}.

\section{Preliminaries}
\subsection{Notation}
\label{sec: notation}
We use $\Ot(.)$ to hide logarithmic factors in the dimension $d$.  The set $\{1, \cdots, l\}$ is denoted by $[l]$. We use $\{0,1, \cdots, k\}^d$ to denote the set of vectors each of whose coordinates is in $[k] \cup \{0\}$. For $\vecx, \vecy \in \{0,1, \cdots, k\}^d$, the Hamming distance between $\vecx$ and $\vecy$, denoted by $\disth(\vecx, \vecy)$,  is the number of coordinates $i \in [d]$ such that $\vecx[i] \neq \vecy[i]$. For a subset $S \subset \hypercube$ and a point $\vecx \in \hypercube$ we use the notation $\disth(\vecx, S)$ to denote $\min_{\vecy \in S} \disth(\vecx, \vecy)$.   The support of $\vecx$, denoted by $\support(\vecx)$,  is the set of non-zero coordinates of $\vecx$ and the Hamming weight of $\vecx$, denoted by $\wt(\vecx)$, is the cardinality of the support. For a set of coordinates $I \subset [d]$, the \emph{restriction} of $\vecx$ to $I$ is the vector in $\{0,1, \cdots, k\}^{|I|}$ obtained by deleting all the coordinates of $\vecx$ except those in $I$. We denote the restriction of $\vecx$ to the set $I$ by $\vecx_I$. For a set $S \subset \R^d$, the convex hull of $S$ will be denoted by $\conv(S)$, and the span of the points in $S$ will be denoted by $\spn{S}$. For a subspace $W$ of $\mathbb{R}^d$, its orthogonal complement is denoted by $W^{\perp}$. We use $\unitsphere$ to denote the unit sphere in $\R^d$.

\subsection{Some Useful Concentration Inequalities}

The binomial distribution with $n$ trials and success probability $p$ is denoted by $\text{Bin}(n,p)$. The following are some standard concentration inequalities for the binomial distribution.

\begin{fact}[Hoeffding's Inequality]\label{clm: bin-concentration} 
Let $Y$ be random variable sampled from $\textup{Bin}(d,1/2)$. For $t > 0$ we have, 
 $$\Pr[\, |Y - d/2| \geq t ] \,\leq\, 2 \cdot \exp(-2t^2/d).$$
\end{fact}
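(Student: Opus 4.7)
The plan is to carry out the standard Chernoff--Markov proof via the exponential moment generating function, which is the textbook derivation of Hoeffding for a symmetric binomial. First I would write $Y = \sum_{i=1}^d X_i$ with $X_i$ i.i.d. Bernoulli$(1/2)$, and center the summands by setting $Z_i = X_i - 1/2 \in \{-1/2, +1/2\}$ so that $Y - d/2 = \sum_{i=1}^d Z_i$ is a sum of independent mean-zero variables.

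Next, I would bound the one-sided probability $\Pr[\sum_i Z_i \geq t]$ by applying Markov's inequality to $\exp(\lambda \sum_i Z_i)$ for a free parameter $\lambda > 0$. Independence lets me factor the MGF as $\prod_{i=1}^d \E[e^{\lambda Z_i}] = \cosh(\lambda/2)^d$, and the elementary inequality $\cosh(x) \leq e^{x^2/2}$ upgrades this to $e^{d\lambda^2/8}$. Hence
$$\Pr\Bigl[\sum_i Z_i \geq t\Bigr] \;\leq\; \exp\!\Bigl(-\lambda t + \tfrac{d\lambda^2}{8}\Bigr).$$
Optimizing over $\lambda > 0$ gives $\lambda^* = 4t/d$ and yields the one-sided tail bound $\exp(-2t^2/d)$.

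Finally, since $\text{Bin}(d,1/2)$ is symmetric around $d/2$ (equivalently, the $Z_i$ are symmetric), the identical argument applied to $-Z_i$ gives $\Pr[\sum_i Z_i \leq -t] \leq \exp(-2t^2/d)$, and a union bound over the two tails produces the factor of $2$ in the claimed inequality.

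There is no real obstacle here: the only substantive step is the scalar inequality $\cosh(x) \leq e^{x^2/2}$, which follows immediately from a term-by-term comparison of Taylor series, using $(2k)! \geq 2^k k!$. Everything else is routine execution of Chernoff's method, and the constant in the exponent of the conclusion is tight precisely because the $Z_i$ are supported on an interval of length $1$ (matching Hoeffding's lemma with $b_i - a_i = 1$).
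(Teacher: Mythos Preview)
Your proof is correct and complete; the Chernoff--Markov argument with the $\cosh(x)\le e^{x^2/2}$ bound is the standard derivation and your optimization and symmetry steps are right. The paper itself does not prove this statement at all---it is listed under ``Some Useful Concentration Inequalities'' as a known \emph{Fact} and simply cited as Hoeffding's inequality---so there is no paper proof to compare against.
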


\noindent The following is an \textit{anti-concentration} inequality for binomial random variables.
\begin{fact}[Anti-Concentration Inequality for the Binomial Distribution]\label{clm: bin-anti-concentration} Let $Y$ be random variable sampled from $\textup{Bin}(d,1/2)$. When $0 \leq t \leq d/8$ we have:
 $$\Pr[\, Y \leq \,  d/2 - t ] \,\geq \,\frac{1}{15} \cdot\exp(-16t^2/d).$$
\end{fact}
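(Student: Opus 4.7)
My plan is to prove this anti-concentration bound via a careful Stirling estimate on the binomial coefficient together with a window-summation trick. A single-point Stirling bound gives
\[
\binom{d}{\lfloor d/2-t\rfloor}\,2^{-d} \;\geq\; \frac{c_{0}}{\sqrt{d}}\,\exp(-2t^{2}/d)
\]
for some absolute constant $c_{0}>0$, which is too small by a factor $1/\sqrt{d}$. However, since the target exponent $16\,t^{2}/d$ is eight times larger than this natural $2\,t^{2}/d$, there is enough slack to sum $\Theta(\sqrt{d})$ consecutive values of $Y$ just below $d/2-t$, absorb the $1/\sqrt{d}$ into a constant, and pay only a modest blow-up in the exponent.

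Concretely, the first step is the pointwise bound: I would apply Stirling's formula $n!=\sqrt{2\pi n}(n/e)^{n}e^{\theta_{n}/(12n)}$ with $0<\theta_{n}<1$ to obtain
\[
\binom{d}{k}\,2^{-d} \;\geq\; \frac{c_{1}}{\sqrt{d}}\,\exp\!\bigl(-d\cdot\mathbb{D}(k/d\,\|\,1/2)\bigr)
\]
for every integer $k$ with $|k-d/2|\leq 3d/8$, where $\mathbb{D}(p\|1/2)=\log 2-H(p)$ is the binary KL-divergence and $H$ is the binary entropy. Using the Taylor expansion $H(1/2-x)=\log 2 - 2x^{2} - (4/3)x^{4} - \cdots$, valid in this regime, I get $d\cdot\mathbb{D}(k/d\|1/2)\leq(2+\alpha)(k-d/2)^{2}/d$ for some small absolute constant $\alpha$. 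The second step is window summation: with $m=\lfloor\sqrt{d}/2\rfloor$, I consider the terms $\Pr[Y=k]$ for $k=\lfloor d/2-t\rfloor-j$, $j=0,\ldots,m-1$. The hypothesis $t\leq d/8$ keeps all these $k$ inside the regime where the pointwise bound applies, and the elementary inequality $(t+\sqrt{d}/2)^{2}\leq 2t^{2}+d/2$ lets me uniformly lower bound each term by $(c_{2}/\sqrt{d})\,e^{-6t^{2}/d}$. Summing the $\approx\sqrt{d}/2$ terms yields
\[
\Pr[Y\leq d/2-t] \;\geq\; c_{3}\,e^{-6t^{2}/d} \;\geq\; c_{3}\,e^{-16t^{2}/d},
\]
and choosing the constants carefully makes $c_{3}\geq 1/15$. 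Small $d$ or small $t$ (where the right-hand side is a bounded-away-from-zero constant) are handled separately via the trivial bound $\Pr[Y\leq d/2]\geq 1/2$.

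The proof is not conceptually deep but is constant-heavy; the main obstacle is the bookkeeping. I must track the combined overhead from (a) Stirling's multiplicative constant, (b) the higher-order Taylor remainder in the KL-expansion, and (c) the window-inflation factor $(t+\sqrt{d}/2)^{2}$ versus $t^{2}$, and ensure this neither pushes the exponent above $16\,t^{2}/d$ nor drops the leading multiplicative constant below $1/15$. The factor-of-$8$ slack in the exponent is comfortable, but the window width $\sqrt{d}/2$ is essentially as large as one can take without eating into that slack, so the calculation must be executed with some care in balancing these constants.
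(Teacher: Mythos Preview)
The paper states this inequality as a \emph{Fact} without proof, so there is no original argument to compare against; it is simply quoted as a standard anti-concentration bound. Your proposal is therefore being assessed on its own merits.

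Your approach is a correct and standard way to establish such a bound. The key ingredients---a Stirling pointwise estimate $\Pr[Y=k]\gtrsim d^{-1/2}\exp(-2(k-d/2)^{2}/d)$, followed by summing over a window of width $\Theta(\sqrt{d})$ to cancel the $d^{-1/2}$---are exactly what one does here, and the identity $(t-\sqrt{d}/2)^{2}\geq 0$ that underlies your inequality $(t+\sqrt{d}/2)^{2}\leq 2t^{2}+d/2$ is the right tool for controlling the window inflation. A back-of-the-envelope check shows the constants are comfortable: in the regime $|k-d/2|\leq d/4$ (which covers your window once $d$ is moderately large) the KL overshoot factor $(2+\alpha)$ is at most about $2.1$, the Stirling prefactor $c_{1}$ is roughly $2/\sqrt{2\pi}\approx 0.8$, and the window sum then yields a leading constant around $0.1$, safely above $1/15$, with exponent well under $16t^{2}/d$. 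The residual floor issues and the finitely many small-$d$ cases are routine.

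One minor caution: your ``small $t$'' fallback $\Pr[Y\leq d/2]\geq 1/2$ only handles $t$ in a bounded range (where $e^{-16t^{2}/d}$ is bounded below), not all of $[0,d/8]$; you still need the main argument for $t$ up to $d/8$ even when $d$ is small. Since for each fixed small $d$ this is a finite check (the left side is a step function in $t$), this is not a real obstacle, but it should be stated that way rather than as a blanket ``small $d$ or small $t$'' case.
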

\noindent The above fact implies the following inequality for the minimum of $m$ independent binomial random variables.
\begin{corollary}\label{clm: min-binomial-concentration}
	Suppose that $Y_1, \ldots, Y_m$ are $m$ independent $\textup{Bin}(d,1/2)$ random variables. Define $Y_{\min}$ to be the minimum of the random variables $Y_i$, i.e., $Y_{\min} = \min\limits_i Y_i$. For $m = 2^{o(d)}$ and $\tfrac{1}{2^m} < \delta < 1$ we have,
    $$ \Pr\left[\,Y_{\min}  \leq \frac{d}{2} - \frac{1}{4} \sqrt{d (\log m - \log(15 \cdot \log (1/\delta))  \,)}\right] \geq 1 - \delta .$$

\end{corollary}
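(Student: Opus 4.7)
The plan is to combine the anti-concentration bound from \Cref{clm: bin-anti-concentration} (applied to a single $\text{Bin}(d,1/2)$) with independence across the $m$ copies, and then invert the resulting inequality to solve for the appropriate value of $t$.

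First, for a parameter $t$ to be chosen later with $0 \leq t \leq d/8$, \Cref{clm: bin-anti-concentration} gives
\[
\Pr[Y_i > d/2 - t] \;\leq\; 1 - \tfrac{1}{15}\exp(-16t^2/d)
\]
for each individual $Y_i$. By independence of $Y_1, \dots, Y_m$ and the elementary inequality $1-x \leq e^{-x}$,
\[
\Pr[Y_{\min} > d/2 - t] \;=\; \prod_{i=1}^m \Pr[Y_i > d/2 - t] \;\leq\; \exp\!\Paren{-\tfrac{m}{15}\exp(-16t^2/d)}.
\]

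Next, I would force the right-hand side to be at most $\delta$. Taking logarithms twice, this holds precisely when $16t^2/d \leq \log m - \log(15 \log(1/\delta))$, i.e.\ when
\[
t \;\leq\; \tfrac{1}{4}\sqrt{d\,\bigl(\log m - \log(15\log(1/\delta))\bigr)}.
\]
Setting $t$ equal to this upper bound yields $\Pr[Y_{\min} > d/2 - t] \leq \delta$, which rearranges to the claimed bound.

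Finally, I need to justify that this choice of $t$ lies in the permitted range $[0, d/8]$ so that \Cref{clm: bin-anti-concentration} was legal to apply. The assumption $\delta > 1/2^m$ guarantees $\log(1/\delta) < m$, so the argument of the square root is of order $\log m$ (up to lower-order terms), and the hypothesis $m = 2^{o(d)}$ gives $\log m = o(d)$. Hence $t = O(\sqrt{d \log m}) = o(d)$, and in particular $t \leq d/8$ for all sufficiently large $d$. The main mild nuisance is checking that the quantity under the square root is nonnegative and that $t \leq d/8$ under the stated regime; everything else is a direct algebraic inversion of the anti-concentration bound, so no real obstacle is anticipated.
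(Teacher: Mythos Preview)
Your proposal is correct and is exactly the intended derivation: the paper does not spell out a proof but simply states that the corollary follows from \Cref{clm: bin-anti-concentration}, and your argument (apply the single-variable anti-concentration bound, exponentiate via independence using $1-x\le e^{-x}$, and invert) is precisely how one unpacks that implication. Your caveat about verifying $0\le t\le d/8$ under the hypotheses $m=2^{o(d)}$ and $\delta>2^{-m}$ is the only loose end, and it is handled just as you indicate.
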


\noindent We now define the hypergeometric distribution and state a corresponding concentration inequality.
\begin{definition}[Hypergeometric Distribution]\label{def: hyper}
Suppose that $d$ balls are drawn randomly without replacement from an urn containing $N$ balls, out of which $K$ are white. The distribution of the random variable $X$, which equals the number of white balls drawn, is denoted by $\hgm(N,K,d)$.
\end{definition}

\begin{fact}[Theorem 4 of \cite{janson2016large}]\label{clm: hypergeometric-concentration}
		Let $Y \sim \hgm(N,K,d)$ and $\mu = \E[Y] = \tfrac{NK}{d}$. For $\eps > 0$ less than a sufficiently small constant, 
		$$\Pr\left[\;\left|Y - \mu \right|  \geq \eps \mu \right] \leq 2\cdot\exp\left(\frac{-\eps ^2\mu }{3}\right).$$
\end{fact}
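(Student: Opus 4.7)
The plan is to derive this inequality from the binomial Chernoff bound, exploiting Hoeffding's classical observation that hypergeometric sampling is no less concentrated than binomial sampling with the same mean. I first flag what appears to be a typo in the statement: for $Y \sim \hgm(N,K,d)$ with $d$ balls drawn from an urn of $N$ containing $K$ white ones, the mean is $\E[Y] = dK/N$, not $NK/d$. Assuming that is the intent, set $p := K/N$ so $\mu = dp$.

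First I would apply the Chernoff trick to the upper tail. For any $t > 0$, Markov's inequality gives
$$\Pr\bigl[Y \geq (1+\eps)\mu\bigr] \;\leq\; e^{-t(1+\eps)\mu}\cdot \E\bigl[e^{tY}\bigr].$$
Since $y \mapsto e^{ty}$ is convex, Hoeffding's comparison inequality for sampling without versus with replacement yields
$$\E\bigl[e^{tY}\bigr] \;\leq\; \E\bigl[e^{tY'}\bigr] \;=\; (1 - p + p\,e^{t})^{d},$$
where $Y' \sim \text{Bin}(d,p)$ is a binomial random variable with the same mean $\mu$. Optimizing $t$ and using $\log(1 + p(e^t - 1)) \leq p(e^t - 1)$ produces the standard multiplicative Chernoff bound $\Pr[Y \geq (1+\eps)\mu] \leq \exp(-\eps^2 \mu / 3)$ whenever $\eps$ is below a small constant. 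The lower tail $\Pr[Y \leq (1-\eps)\mu] \leq \exp(-\eps^2 \mu / 2)$ is symmetric, and a union bound produces the stated two-sided inequality with the leading factor of $2$.

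The key technical step is justifying the MGF comparison $\E[e^{tY}] \leq \E[e^{tY'}]$, since the $d$ draws without replacement are not independent. I would establish it in one of two standard ways. The first is to write $Y = \xi_1 + \cdots + \xi_d$, where $\xi_1,\ldots,\xi_N \in \{0,1\}$ are the first $d$ coordinates of a uniformly random permutation of a fixed string with exactly $K$ ones; these indicators are well known to be negatively associated, and negative association immediately implies $\E\bigl[\prod_i e^{t\xi_i}\bigr] \leq \prod_i \E\bigl[e^{t\xi_i}\bigr]$, which is exactly the inequality we want. The alternative is Hoeffding's original argument: realize the hypergeometric as the conditional law of $\eta_1 + \cdots + \eta_d$ given $\eta_1 + \cdots + \eta_N = K$, where the $\eta_i$ are i.i.d.\ Bernoulli$(p)$, and invoke Jensen's inequality on the convex function $y \mapsto e^{ty}$. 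Either reduction collapses the problem onto the standard binomial MGF calculation, after which no further obstacle remains.
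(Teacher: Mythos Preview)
The paper does not prove this statement at all: it is stated as a \emph{Fact} and attributed to Theorem~4 of Janson's paper on large-deviation inequalities, with no argument given. So there is no ``paper's proof'' to compare against; your proposal supplies a proof where the authors chose to cite one.

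Your approach is correct and standard. The reduction to the binomial Chernoff bound via Hoeffding's convex-order comparison (sampling without replacement is at least as concentrated as sampling with replacement) is exactly how such bounds are typically derived; the negative-association route you sketch is a clean modern way to justify the MGF inequality. One small quibble: your second ``alternative'' justification is too terse. Writing the hypergeometric as the conditional law of $\eta_1+\cdots+\eta_d$ given $\sum_{i\le N}\eta_i=K$ is correct, but the step from there to $\E[e^{tY}]\le \E[e^{tY'}]$ is not a one-line Jensen application; Hoeffding's 1963 argument proceeds by an exchangeability-and-convexity computation that takes a bit more work. Your first justification via negative association is self-contained and suffices on its own, so this does not affect correctness. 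You are also right to flag the apparent typo in the mean.
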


\section{Results for \texorpdfstring{\Cref{prob: hamming-vanilla}}{Problem 2}}\label{sec: hamming-cube-vanilla}
In this section, we present results for \Cref{prob: hamming-vanilla}. Recall that, in this setting the hidden set $\hidden$ consists of $n$ points from the $d$-dimensional hypercube $\{0,1\}^d$. The query model allows the codebreaker to query $\vecq \in \{0,1\}^d$ to learn the point in $\hidden$ with the smallest Hamming distance to $\vecq$. We use $\query{\vecq}$ to denote this point.

In the following sections, we give upper and lower bounds on the query complexity of \Cref{prob: hamming-vanilla}.
\subsection{A Simple 2-Round Adaptive Algorithm}\label{sec: hamming-cube-algorithm}
We analyze the natural algorithm which makes random queries in the first round and then uses the information revealed to adaptively make queries in the second round.

\paragraph{Overview of \Cref{alg: hamming-nearest-point}. }  In the first round, the algorithm queries  $t = 2^{\Tilde{O}(\sqrt{d \log n})}$ points $\vecy_1,  \ldots, \vecy_t$ uniformly at random from $\hypercube$. In the second round, the algorithm queries all points within Hamming distance $r$ of points discovered in the first round.

\begin{algorithm}[tb]
   \caption{Two-Round Adaptive Algorithm}
   \label{alg: hamming-nearest-point}

\begin{algorithmic}[1]
    \STATE For $t := 2^{O\left( \sqrt{d \cdot \log d \cdot \log (n)} \right)}$, query $t$ points $\vecy_1, \ldots, \vecy_t$ which are sampled independently and uniformly at random from $\{0,1\}^d$. Let $\vecz_1, \ldots, \vecz_t$ be the responses to the queries where $\vecz_i = \query{\vecy_i}$. 
    \STATE For each $\vecz_i$, query all points at Hamming distance at most $r$ from it where $r = O(\sqrt{(d\log n)/ \log d})$.
    \STATE Output the set of all points discovered by the queries made by the algorithm.
\end{algorithmic}
\end{algorithm}

    \begin{theorem}\label{thm: hamming-nearest-point}
   Let $H$ be a hidden set with $n \leq 2^{o(d/\log d)}$ points. ~\Cref{alg: hamming-nearest-point}  makes at most $ 2^{O(\sqrt{d \cdot \log d \cdot \log  n})}$ queries and recovers $\hidden$ with probability at least $2/3$.
    \end{theorem}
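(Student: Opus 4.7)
The proof splits into bounding the query count and proving correctness. For the query count, round~1 makes $t = 2^{O(\sqrt{d\log d\log n})}$ queries, and round~2 makes at most $t \cdot \sum_{i=0}^{r}\binom{d}{i} \leq t \cdot (ed/r)^r$ additional queries. With $r = O(\sqrt{d\log n/\log d})$, $\log((ed/r)^r) = O(r\log d) = O(\sqrt{d\log d\log n})$, so the total is $2^{O(\sqrt{d\log d\log n})}$, matching the claim.

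For correctness, observe that if a response $\vecz_i$ satisfies $\disth(\vecz_i,\vecx) \leq r$ for some $\vecx \in H$, then $\vecx \in B_r(\vecz_i)$ is queried in round~2 and $\query{\vecx} = \vecx$, so $\vecx$ is recovered. By a union bound over $\vecx \in H$, it suffices to prove $\Pr[\forall i:\disth(\vecz_i,\vecx) > r] \leq \tfrac{1}{3n}$ for each $\vecx \in H$. Setting
\[
q_\vecx := \Pr_{\vecy \sim \{0,1\}^d}\bigl[\disth(\query{\vecy},\vecx) \leq r\bigr],
\]
the i.i.d.\ structure of the $\vecy_i$ together with $(1-q_\vecx)^t \leq e^{-q_\vecx t}$ reduces this to the \emph{core lemma}: for every $\vecx \in H$, $q_\vecx \geq 2^{-O(\sqrt{d\log d\log n})}$, after which the constant in $t$ is tuned so that $q_\vecx \cdot t \geq \ln(3n)$.

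To prove the core lemma, let $S := H \cap B_r(\vecx)$ and pick $s := C \cdot d^{3/4}(\log d\log n)^{1/4}$ for a sufficiently large constant $C$; the hypothesis $\log n = o(d/\log d)$ ensures $s \leq d/8$ for large $d$, so \Cref{clm: bin-anti-concentration} applies and yields
\[
\Pr_{\vecy}[\disth(\vecy,\vecx) \leq d/2 - s] \;\geq\; \tfrac{1}{15}\exp(-16 s^2/d) \;=\; 2^{-O(\sqrt{d\log d\log n})}.
\]
Next, condition on $\disth(\vecy,\vecx) = k$ with $k \leq d/2 - s$; for any $\vecx' \in H$ with $\disth(\vecx,\vecx') = \Delta$, a direct hypercube calculation (writing $\vecy \oplus \vecx' = (\vecy \oplus \vecx) \oplus (\vecx \oplus \vecx')$) gives $\disth(\vecy,\vecx') = k + \Delta - 2X$ where $X \sim \hgm(d,\Delta,k)$, so $\vecx'$ strictly beats $\vecx$ iff $X > \Delta/2$. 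Since $\E[X] = k\Delta/d \leq \Delta/2 - s\Delta/d$, \Cref{clm: hypergeometric-concentration} gives $\Pr[X > \Delta/2] \leq 2\exp(-\Omega(s^2\Delta/d^2))$. Union-bounding over the at most $n$ points $\vecx' \in H\setminus S$ (each with $\Delta > r$), the failure probability is at most $2n\exp(-\Omega(s^2 r/d^2))$, which is $\leq \tfrac{1}{2}$ for $C$ large enough, since $r = \Theta(\sqrt{d\log n/\log d})$ makes $s^2 r/d^2 = \Omega(\log n)$. When both events hold, no point of $H\setminus S$ is closer to $\vecy$ than $\vecx$, forcing $\query{\vecy} \in S \subseteq B_r(\vecx)$ and so $\disth(\query{\vecy},\vecx) \leq r$. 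Combining, $q_\vecx \geq 2^{-O(\sqrt{d\log d\log n})}$.

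The main obstacle is calibrating $s$: smaller $s$ improves the anti-concentration lower bound but weakens the hypergeometric tail used to beat distant points. The choice $s = \Theta(d^{3/4}(\log d\log n)^{1/4})$ is the unique balance point (up to constants) where both ingredients contribute the exponent $\sqrt{d\log d\log n}$, which is exactly why this exponent appears in $t$ and hence in the final query complexity.
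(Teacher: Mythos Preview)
Your proof is correct and uses the same two probabilistic ingredients as the paper (binomial anti-concentration, \Cref{clm: bin-anti-concentration}, and hypergeometric concentration, \Cref{clm: hypergeometric-concentration}), but the decomposition is genuinely different. The paper fixes $\vecx$ and reasons about the \emph{nearest} first-round query $\vecy = \argmin_i \disth(\vecx,\vecy_i)$: it invokes \Cref{clm: min-binomial-concentration} to show $\wt(\vecy)\le d/2-\Theta(\sqrt{d\log t})$ with probability $1-1/\poly(n)$, and then, conditioned on $\wt(\vecy)=w$, applies the hypergeometric tail for each far point $\vecx'$ followed by a union bound. You instead lower-bound the success probability $q_\vecx$ of a \emph{single} random query and amplify via independence, choosing the threshold $s$ to balance the anti-concentration cost against the hypergeometric tail. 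The two calibrations coincide: the paper's gap $\sqrt{d\log t}/8$ equals $\Theta(d^{3/4}(\log d\log n)^{1/4})$, exactly your $s$. Your route is more modular (it avoids the min-of-binomials corollary and makes the trade-off in $s$ explicit), while the paper's route delivers the high-probability guarantee directly without passing through a per-query probability.

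One small technical point: the paper's good event $G_1$ also imposes $\wt(\vecy)\ge d/4$, which keeps the relative deviation $\eps$ in \Cref{clm: hypergeometric-concentration} below a constant as that fact requires. Your argument conditions on $k\le d/2-s$ with no lower bound on $k$, so for very small $k$ the deviation $(\Delta/2-\mu)/\mu$ exceeds the stated range of \Cref{clm: hypergeometric-concentration}. This is easily patched by restricting to $k\in[d/4,\,d/2-s]$; the anti-concentration lower bound is unaffected since $\Pr[k<d/4]=\exp(-\Omega(d))$.
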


    Our main lemma is to show that, for any fixed $\vecx \in H$, the first round recovers a point $\vecz \in H$ such that $\disth(\vecx,\vecz) \leq r = \tilde{O}(\sqrt{d \log n})$ with probability at least $(1 - 1/3n)$. It then follows by a union bound that, with probability at least $2/3$, each point in the hidden set is at a distance at most $r$ from some point recovered in round $1$. Thus, the algorithm recovers all points after the second round of queries with constant probability.

    \begin{lemma}\label{lem: main-lem-hamming}
        Fix a hidden point $\vecx \in H$ and let $\vecz$ be the nearest point to $\vecx$ among the points $\vecz_i$ learned after the first round of queries. With probability at least $(1 - 1/3n)$, we have $\disth(\vecx, \vecz) \leq r$. 
    \end{lemma}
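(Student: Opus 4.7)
The plan is to reduce the lemma to the statement that, for at least one query $\vecy_j$, the point $\vecx$ is strictly closer to $\vecy_j$ than every ``far'' point of $H$. Set $H_f := \{\vecu \in H : \disth(\vecu,\vecx) > r\}$. If some $\vecy_j$ satisfies $\disth(\vecy_j,\vecx) < \disth(\vecy_j,\vecu)$ for every $\vecu \in H_f$, then $\vecz_j = \argmin_{\vecu \in H}\disth(\vecy_j,\vecu)$ must lie in $H \setminus H_f$, so $\disth(\vecx,\vecz_j) \leq r$ holds automatically. It therefore suffices to show that, over the $t$ independent random queries, this ``wins-against-$H_f$'' event occurs at least once with probability $\geq 1 - 1/(3n)$.

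First I would analyze a single query, conditioning on $s := \disth(\vecy_j,\vecx)$. Fix $\vecu \in H_f$ with $d^\ast := \disth(\vecx,\vecu) \geq r$ and write $D := \{i : \vecy_j[i]\neq\vecx[i]\}$ and $D' := \{i : \vecx[i]\neq\vecu[i]\}$. Using that the alphabet is binary, a direct check gives $\disth(\vecy_j,\vecu) = |D\triangle D'| = s + d^\ast - 2|D\cap D'|$, so $\vecu$ beats $\vecx$ iff $|D\cap D'|\geq d^\ast/2$. Conditioned on $s$, the set $D$ is a uniform size-$s$ subset of $[d]$, so $|D\cap D'|\sim \hgm(d,d^\ast,s)$ with mean $\mu = sd^\ast/d$. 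Setting $\tau := d/2 - s$ and invoking \Cref{clm: hypergeometric-concentration} with $\eps = \tau/s$ gives $\Pr[\vecu\text{ beats }\vecx\mid s]\leq 2\exp(-\Theta(\tau^2 d^\ast/(sd)))$, which is at most $2\exp(-\Theta(\tau^2 r/d^2))$ for $s\leq d/2$ and $d^\ast\geq r$. A union bound over the at most $n$ points of $H_f$ produces failure probability $\leq 2n\exp(-\Theta(\tau^2 r/d^2))$, which drops below $1/6$ once $\tau = \Theta(d^{3/4}(\log n \log d)^{1/4})$ (plugging in $r = \Theta(\sqrt{d\log n/\log d})$).

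Next I would apply the anti-concentration bound \Cref{clm: bin-anti-concentration} to show that a uniformly random $\vecy \in \{0,1\}^d$ satisfies $\disth(\vecy,\vecx)\leq d/2 - \tau$ with probability $p \geq (1/15)\exp(-16\tau^2/d) = \exp(-O(\sqrt{d\log n\log d}))$. Combining this with the previous paragraph, each independent query $\vecy_j$ witnesses the good event with probability at least $5p/6$, so the probability that none of the $t$ queries does is at most $(1 - 5p/6)^t \leq \exp(-\Omega(pt))$. The choice $t = 2^{O(\sqrt{d\log d\log n})}$ gives $pt \geq \Omega(\log n)$, and hence failure probability at most $1/(3n)$.

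The principal obstacle is the two-sided tension in choosing $\tau$: it must be \emph{large} enough that hypergeometric concentration lets $\vecx$ simultaneously out-distance all $\Theta(n)$ far points (forcing $\tau^2 r \gtrsim d^2\log n$), yet \emph{small} enough that a uniform query falls within $d/2 - \tau$ of $\vecx$ with probability at least $1/t$ (forcing $\tau^2/d \lesssim \log t$). The choice $r = \Theta(\sqrt{d\log n/\log d})$ is exactly what allows a single $\tau$ to satisfy both constraints, and the resulting $\tau^2/d = \Theta(\sqrt{d\log n\log d})$ is precisely what produces the $\sqrt{d\log d\log n}$ exponent in the query count.
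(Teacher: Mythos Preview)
Your proposal is correct and uses the same ingredients as the paper's proof: hypergeometric concentration (\Cref{clm: hypergeometric-concentration}) to show that a query sufficiently close to $\vecx$ beats every far point, binomial anti-concentration (\Cref{clm: bin-anti-concentration}) to show such a close query occurs with reasonable probability, and a union bound over $H_f$. The only organizational difference is that the paper first singles out the \emph{nearest} query $\vecy=\argmin_i\disth(\vecx,\vecy_i)$, applies \Cref{clm: min-binomial-concentration} to place $\wt(\vecy)\le d/2-\Theta(\sqrt{d\log t})$ with high probability, and then runs the hypergeometric argument for that one query; you instead compute the per-query success probability $p$ and amplify via the $t$ independent trials. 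Both routes yield the same threshold $\tau^2/d=\Theta(\sqrt{d\log n\log d})$ and the same final bound, so this is a cosmetic rather than a substantive difference.
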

    \begin{proof}
        Let $\vecy = \argmin_{\vecy_i} \disth(\vecx, \vecy_i)$ be the nearest query to $\vecx$ among the queries made in the first round. Let $\vecx' \in H$ be some arbitrary but fixed point hidden point satisfying $\disth(\vecx, \vecx') > r$ (the lemma trivially follows if no such $\vecx'$ exists). We show that $\disth(\vecx, \vecy) < \disth(\vecx', \vecy)$ holds with probability at least $(1 - 1/3n^2)$.  A union bound then shows that with probability at least $(1 - 1/3n)$, we simultaneously have $\disth(\vecx, \vecy) < \disth(\vecx'', \vecy)$ for all $\vecx'' \in H$ satisfying $\disth(\vecx'' , \vecx) > r$. It then follows that $\disth(\vecx, \query{\vecy}) \leq r$ holds with probability $(1 - 1/3n)$ thus proving the lemma.

        We have reduced our task to showing that $\disth(\vecx, \vecy) < \disth(\vecx', \vecy)$ holds with probability at least $(1 - 1/3n^2)$. We now make some assumptions that simplify the analysis. For the rest of the analysis, we assume that $\vecx$ is the origin (i.e., all the $d$ coordinates of $\vecx$ are $0$). This corresponds to ``shifting the origin to $\vecx$'' by XOR-ing all hypercube points with $\vecx$. This operation preserves distances between any two points, so our analysis will not lose any generality.  

        We now have $\disth(\vecx,\vecy) = \wt(\vecy)$ where $\wt(\vecy) = |\supp{\vecy}|$ is the number of coordinates in the support of $\vecy$. It is also easy to see that
$\disth(\vecx', \vecy) = \wt(\vecx') + \wt(\vecy) -2 I$
    where $I:= |\support(\vecx') \cap \support(\vecy)|$. It follows that   $\disth(\vecx, \vecy) < \disth(\vecx', \vecy)$ is equivalent to  
    \begin{align}\label{eq: goal-i}
        \wt(\vecx') > 2I. 
    \end{align}In the rest of the proof, we focus on showing that \Cref{eq: goal-i} holds with probability at least $(1 - 1/3n^2)$. We now define some useful events that will help us prove this upper bound on $I$. Specifically, we let $G_1$ to be the good event that $\wt(\vecy) = d/2 - \Omega(\sqrt{d \log t})$. This is a good event because we can use the fact that $\vecy$ has a small support to argue that the size $I$ of the common support of $\vecx'$ and $\vecy$. Also, we let $G_2$ be the good event that $I$ is not too much more than its expectation $(\wt(y) \wt(\vecx'))/d$. 
    \begin{itemize}
        \item For an integer $w \in [0, d]$, let $E_w$ denote the event that $\wt(\vecy) = w$.
        \item  Let $G_1$ denote the event that $\wt(\vecy) \in [d/4, d/2 - (\sqrt{d \log t})/8]$. 
        \item   Let $G_2$ denote the event that ${I < \mu + \sqrt{3\mu \log(12n^2)}}$ where $\mu = (\wt(\vecx') \wt(\vecy))/d$.
    \end{itemize}

    Using some simple algebra, we can now show the following claim. 
    \begin{claim}\label{clm: query-bound-hamming}
        Suppose that the number of queries $t$ satisfies $t = 2^{\Omega((d \log n)/r)}$. If the good events $G_1,G_2$ both occur, then $\wt(\vecx') > 2I$. 
    \end{claim}
    \begin{proof}
        If the event $G_1 \cap G_2$ occurs then we have,
\begin{align*}
    2I &< 2\frac{\wt(\vecy)\cdot\wt(\vecx')}{d} + 2\sqrt{\frac{3\wt(\vecy)\cdot \wt(\vecx')\cdot\log (12n^2)}{d}}\\
    &<\frac{2\left(\tfrac{d}{2} - \tfrac{1}{8} \sqrt{d \cdot \log t} \right)\cdot\wt(\vecx')}{d} +2 \sqrt{3 \cdot \frac{\tfrac d2 \cdot \wt(\vecx')\cdot\log (12n^2)}{d}}\\
    &= \wt(\vecx') - \frac 14 \cdot \wt(\vecx')\cdot \sqrt{\frac{\log t}{d}} +\sqrt{ 6 \cdot \wt(\vecx')\cdot\log (12n^2)}.
\end{align*}
Therefore, $\wt(x') > 2I$ holds if
\begin{align*}
\frac 14 \cdot \wt(\vecx')\cdot \sqrt{\frac{\log t}{d}} > \sqrt{ 6 \cdot \wt(\vecx')\cdot\log (12n^2)}. \\
\end{align*}
The above inequality on rearranging is equivalent to $t > \exp(96\,d \,  \log(12 n^2)/ \wt(x'))$. Since $\wt(x') > r$, we conclude that $t > \exp(96 \, d \, \log (12 n^2)/r)$ is a sufficient condition for $\wt(\vecx') > 2I$. 
 \Qed{$\,$\Cref{clm: query-bound-hamming}}
 
    \end{proof}

    Note that our choice of parameters $t, r$ in \Cref{alg: hamming-nearest-point} satisfy the first condition of the above claim. Therefore, our task reduces to proving that $G_1 \cap G_2$ holds with probability $(1 - 1/3n^2)$ to finish the proof of \Cref{lem: main-lem-hamming}. This is what we do next.
    
    First, we show that $G_1$ holds with high probability.  We observe that $\wt(\vecy)$ is the minimum of $t$ i.i.d binomial random variables distributed as $\text{Bin}(d, 1/2)$. The claim below then follows from standard concentration inequalities. 

     \begin{claim}\label{clm: prob-g1}
      $\Pr[G_1] \geq (1 - 1/6n^2)$.
    \end{claim}
    \begin{proof}
          The queries $\vecy_1, \ldots, \vecy_t$ are chosen independently uniformly at random, and therefore, the distribution of the random variable $\disth(\vecx, \vecy_i) = \wt(\vecy_i)$ is $ \text{Bin}(d,1/2)$. Since $\vecy$ is the nearest query to $\vecx$ it follows that $\wt(\vecy)$  is the minimum of $t$ independent $\text{Bin}(d, 1/2)$ random variables. If $n = 2^{o(d/ \log d)}$ we have that $t = 2^{O(\sqrt{d \log d \log n})} =2^{o(d)}$. 
  
  Applying \Cref{clm: min-binomial-concentration}, we conclude that 
        $\wt(\vecy) < d/2 - \frac{1}{4}\sqrt{d\cdot (\log t - \log(15(\log 12n^2)))}$ holds with probability at least $(1 - 1/12n^2)$. For sufficiently large $d$ we have, 
        ${ \log(15(\log 12n^2)) \leq 2 \log d \leq \frac{3}{4} \log t }$. Therefore,  $\wt(y) < d/2 - (\sqrt{d \log t})/8$ holds with probability at least $(1 - 1/12n^2)$.

To show the lower bound on $\wt(y)$ we use \Cref{clm: bin-concentration}. The probability that a binomial random variable $\text{Bin}(d, 1/2)$ is less than $d/4$ is at most $2 \exp(-d)$. By a union bound, the probability that the minimum of $t = 2^{o(d)}$  such binomial random variables are less than $d/4$ is at most $\exp(- \Omega(d)) \leq 1/12n^2$. Therefore, $\wt(\vecy) \geq d/4$ with probability at least $(1 - 1/12n^2)$.

A final union bound shows that both the lower and upper bounds on $\wt(\vecy)$ hold with probability at least $(1 - 1/6n^2)$. 
\Qed{$\,$\Cref{clm: prob-g1}}

    \end{proof}

        Next, we want to show that the upper bound on $I$ guaranteed by event $G_2$ also holds with high probability. The key observation is the following: conditioned on $\wt(\vecy) = w$, the distribution of $\vecy$ is uniform over points in $\{0,1\}^d$ with weight $w$. Hence, the distribution of $I$ is hypergeometric with mean $\mu = (w \cdot \wt(\vecx'))/d$. Using tail bounds for the hypergeometric distribution, we then show that $I = \mu + O(\sqrt{\mu \log n})$ holds with high probability.  
    \begin{claim}\label{clm: support-intersection-bound}\label{clm: prob-g2} For any integer $w \in [d/4, d/2]$ we have $\Pr[G_2 \, | \, E_w] \geq (1 - 1/6n^2)$. 
    \end{claim}
    \begin{proof}
        Conditioning on $\wt(\vecy) = w$ (the event $E_w$), by symmetry, the distribution of $\vecy$ is uniform over the points in the hypercube with weight $w$. It follows that conditioned on $E_w$, the random variable $I$ is distributed according to  $\hgm(d, \wt(\vecy), \wt(\vecx'))$ (see \Cref{def: hyper}). The claim then follows by setting $\eps = \sqrt{3 \log(12n^2)/\mu}$ and $\mu =(\wt(\vecy) \cdot \wt(\vecx'))/d$ in \Cref{clm: hypergeometric-concentration}. 
    \Qed{$\,$\Cref{clm: prob-g2}}
    
    \end{proof}
    
     The below claim follows from \Cref{clm: prob-g1}, \Cref{clm: prob-g2}, and the law of total probability. 

     \begin{claim}\label{clm: total-probability}
     $\Pr[G_1 \cap G_2] \geq (1 - 1/3n^2).$
     \end{claim}
     \begin{proof}
         Let $J$ denote the set of integers in the interval $[d/4, d/2 - (\sqrt{d \log t})/8]$. Observe that $G_1 = \bigcup_{w \in J} E_w$. It follows that:
     \begin{align*}
        \Pr[G_1 \cap G_2] &= \sum_{w \in J} \Pr[G_2 \cap E_w] 
        = \sum_{w \in J} \Pr[G_2 \,|\, E_w] \Pr[E_w] \geq (1 - 1/6n^2) \cdot \sum_{w \in J} \Pr[E_w]\\
        &=(1 - 1/6n^2) \Pr[G_1] \geq (1 - 1/6n^2)^2\geq 1 - 1/3n^2. \Qed{\,\Cref{clm: total-probability}}
    \end{align*}
    
     \end{proof}

    This completes the proof of \Cref{lem: main-lem-hamming}
    \end{proof}
        The following lemma bounds the number of queries made by \Cref{alg: hamming-nearest-point}, concluding the proof of \Cref{thm: hamming-nearest-point}.
    \begin{lemma} \Cref{alg: hamming-nearest-point} makes at most $ 2^{O(\sqrt{d \cdot \log d \cdot \log  n})}$ queries.
    \end{lemma}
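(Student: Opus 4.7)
The plan is to bound the queries from each round separately and then sum them. The first round contributes exactly $t = 2^{O(\sqrt{d \cdot \log d \cdot \log n})}$ queries by construction, so all the work lies in bounding the second round.

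For the second round, the algorithm queries, for each of the $t$ responses $\vecz_i$, every point in the Hamming ball of radius $r$ around $\vecz_i$, where $r = O(\sqrt{(d \log n)/\log d})$. The size of such a Hamming ball is
\[
 \sum_{i=0}^{r} \binom{d}{i} \leq (r+1)\binom{d}{r} \leq d^{r+1}.
\]
So the total number of second-round queries is at most $t \cdot d^{r+1}$. Plugging in the choice of $r$,
\[
 d^{r+1} = \exp\bigl(O(r \log d)\bigr) = \exp\Bigl(O\bigl(\sqrt{(d\log n)/\log d}\cdot \log d\bigr)\Bigr) = 2^{O(\sqrt{d\cdot \log d \cdot \log n})}.
\]

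Adding the two contributions gives a total of $t + t \cdot d^{r+1} = 2^{O(\sqrt{d \cdot \log d \cdot \log n})}$ queries, as claimed. There is no real obstacle here; the only step that requires any care is choosing the right upper bound on the volume of a Hamming ball and verifying that the product $r \log d$ comes out to $O(\sqrt{d \log d \log n})$, which matches (and is why the parameters $t$ and $r$ were balanced that way in the first place).
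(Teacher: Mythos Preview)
Your proof is correct and follows essentially the same approach as the paper: bound the first round by $t$ directly, bound the second round by $t \cdot \sum_{i=0}^r \binom{d}{i} \leq t \cdot d^{r+1}$, and then verify that $d^{r+1} = 2^{O(\sqrt{d \log d \log n})}$ via $r \log d = O(\sqrt{d \log d \log n})$. The paper's version is slightly terser but the argument is the same.
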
  
    \begin{proof}
    The number of queries in the first round is clearly $2^{O(\sqrt{d \cdot \log d \cdot \log n})}$. In round $2$, the algorithm queries all points at a Hamming distance at most $r = O(\sqrt{(d \log n)/\log d})$ from the points discovered in round $1$. This is at most $t \cdot \sum_{i = 0}^r \binom{d}{i} \leq td^{r+1} = 2^{O(\sqrt{d \cdot \log d \cdot \log  n})}.\qedhere$
    \end{proof}

\subsection{A Lower Bound Against \texorpdfstring{$r$}{r}-Round Adaptive Algorithms}
The main goal of the section is to prove query complexity lower bounds. In particular, we study the trade-off between query complexity and adaptivity.

We say that a randomized algorithm has success probability $p$, if, for each input $H \subset \{0,1\}^d$, it correctly learns all the points in $H$ with probability at least $p$. The main result of this section is a lower bound on the query complexity of $r$-round adaptive randomized algorithms with constant success probability.

\begin{theorem}\label{thm: final-r-round-lower-bound}
 Let $d$ be a sufficiently large integer and $r= O(\log \log d)$ be any positive integer.  Any $r$-round adaptive randomized algorithm for \Cref{prob: hamming-vanilla} with success probability at least $2/3$ must make $\exp({\Omega(d^{3^{-(r-1)}})})$ queries even when the size of the hidden set $H$ is promised to be $O(d^3)$. 
\end{theorem}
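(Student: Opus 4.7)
The plan is to prove \Cref{thm: final-r-round-lower-bound} via Yao's minimax principle, by constructing a distribution $\mu_r$ on hidden sets $H \subset \{0,1\}^d$ with $|H| = O(d^3)$ such that any deterministic $r$-round algorithm making at most $Q_r := \exp(c \cdot d^{3^{-(r-1)}})$ queries (for a sufficiently small constant $c > 0$) fails to recover $H$ on $\mu_r$ with probability strictly greater than $1/3$. This suffices because any randomized algorithm succeeding with probability $\geq 2/3$ on every input must succeed with probability $\geq 2/3$ against $\mu_r$, which would yield a deterministic algorithm doing the same after fixing its random tape.

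The sets drawn from $\mu_r$ will be decomposed as $H = B \cup T$, where $B$ is a \emph{blocker} set and $T$ is the \emph{target} set, each of size $O(d^3)$. The blockers are placed so that for every query $\vecq$ the algorithm could reasonably issue, the nearest neighbour in $H$ is, with high probability, in $B$; thus the response $\query{\vecq}$ reveals nothing about the identity of $T$. Following the intuition the authors themselves sketch, $B$ will consist of a few ``focal'' points together with a small random scatter, while $T$ is embedded inside a random low-dimensional coordinate subcube, so that the only way to reach a target inside the blocker radius is to first identify that subcube. The concentration bounds recorded in \Cref{clm: bin-concentration}, \Cref{clm: min-binomial-concentration}, and \Cref{clm: hypergeometric-concentration} provide the quantitative control needed to certify the shielding property.

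The proof then proceeds by induction on $r$. The base case $r = 1$ amounts to an $\exp(\Omega(d))$ non-adaptive lower bound: the random subcube housing $T$ is drawn from a family of size $\exp(\Omega(d))$, so by a union bound over the $t$ fixed queries, unless $t = \exp(\Omega(d))$ no query's answer depends on the identity of the subcube, and the algorithm cannot even locate one target. For the inductive step, I will show that after a single round of $Q_r$ queries the joint distribution of responses is, up to total variation $o(1)$, the same as the distribution obtained when a canonical fixed blocker set answers every query; this is a coupling/simulation argument driven by the ``shielding'' estimate above. Consequently, the posterior distribution over the remaining $(r-1)$-round problem coincides, up to $o(1)$ error, with a fresh instance of the lower-bound problem in induced dimension $d' = \Theta(d^{1/3})$, to which the inductive hypothesis applies and yields the desired bound $\exp(\Omega((d')^{3^{-(r-2)}})) = \exp(\Omega(d^{3^{-(r-1)}}))$ on queries in the remaining rounds.

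The main technical obstacle, I expect, will be arranging the recursive construction so that three properties hold simultaneously: (i) $|H| \leq O(d^3)$ uniformly across the induction, (ii) for each round $i$, the shield radius of $B$ strictly exceeds the distance from every query in that round to the subcube housing $T$, except with probability at most $1/(3 Q_r)$, and (iii) after conditioning on the shielded-response event for rounds $1, \dots, i$, the conditional distribution of the still-unseen targets is a faithful copy of the $(r-i)$-round instance in dimension $d^{3^{-i}}$. Enforcing (iii) is the delicate piece: the shielded-response event must be nearly independent of the random identity of the target subcube, which I plan to establish by a round-elimination lemma proved from the hypergeometric tail bound \Cref{clm: hypergeometric-concentration} together with a careful accounting of how many coordinates of the hidden subcube any single round of $Q_r$ queries can effectively ``touch.''
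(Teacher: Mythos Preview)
Your high-level architecture---Yao's principle plus a round-elimination induction in which each round of queries is absorbed at the cost of shrinking the effective dimension by a cube root---matches the paper exactly. Where you diverge is in the concrete constructions, and the paper's choices are worth comparing against because they dissolve precisely the obstacle you flag as ``delicate.''

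For the base case the paper does not use a random subcube. It picks a uniformly random center $\vecu$, puts \emph{all} of $N(\vecu,2)$ (the $\binom{d}{2}$ points at Hamming distance exactly $2$) into $H$ as blockers, and adds a uniformly random subset of $N(\vecu,1)$ as the target. Any query at distance $\geq 2$ from $\vecu$ is answered by a point of $N(\vecu,2)$, and a non-adaptive algorithm with $2^{cd}$ queries lands within distance $1$ of $\vecu$ only with probability $2^{-\Omega(d)}$; conditioned on missing, the responses are a deterministic function of $\vecu$ alone, so the algorithm must guess the random subset of $N(\vecu,1)$ blind. No hypergeometric estimates are needed here.

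For the inductive step the paper builds up rather than peels down: given a hard distribution $\mathcal{D}_r$ in dimension $t$, it constructs $\mathcal{D}_{r+1}$ in dimension $t' = \Theta(t^3)$ by (a) zero-padding the $t$-dimensional instance into the first $t$ coordinates, (b) adjoining a \emph{fixed} blocker set of $O(t)$ points with pairwise \emph{disjoint} supports of size $\ell = 100t^2$ in the remaining coordinates, and (c) XOR-ing the entire construction with a uniformly random $\vecu \in \{0,1\}^{t'}$. Disjoint supports make the events ``query $\vecz$ is closer to blocker $\vecx_i$ than to the target region'' mutually independent across $i$, so a single anti-concentration estimate (\Cref{clm: bin-anti-concentration}) plus independence yields the shielding bound---again no hypergeometric control. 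Crucially, the random XOR is what resolves your point (iii): since $\vecz_j \oplus \vecu$ is uniform, the shielding event for round one is measurable with respect to $\vecu$ alone and is therefore \emph{exactly} independent of the embedded $\mathcal{D}_r$ instance. The posterior after round one is literally $\mathcal{D}_r$, not merely close in total variation, and the induction proceeds without any coupling argument.

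Your plan is not wrong in spirit, but the ``random coordinate subcube plus focal points with scatter'' construction leaves you having to control the dependence between the shielding event and the identity of the subcube---exactly the coupling you anticipate being hard. The paper's XOR-shift together with disjoint-support blockers eliminates that dependence by design, and its base-case shell construction avoids subcubes entirely. If you carry out your version you will likely be pushed toward these same devices.
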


Let $\deter(q,r)$ denote the set of all deterministic algorithms that make at most $q$ queries over $r$ adaptive rounds. We shall use the following version of Yao's lemma, which reduces proving randomized query complexity lower bounds to that of designing hard distributions for deterministic algorithms. 

\begin{lemma}[Yao's Principle] \label{lem: yao}Suppose that there exists a distribution $\mathcal{D}$ over instances of \Cref{prob: hamming-vanilla} such that for every deterministic algorithm $\mathcal{A} \in \deter(q,r)$,  we have,  $ \Pr_{H \sim \mathcal{D}} [\mathcal{A} \text{ succeeds on } H] < \delta$; then any $r$-round randomized algorithm which makes at most $q$ queries has success probability at most $\delta$.
\end{lemma}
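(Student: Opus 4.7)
The plan is to prove the lemma by the classical averaging argument underlying Yao's minimax principle. First, I view any $r$-round randomized algorithm $\mathcal{R}$ making at most $q$ queries as a probability distribution over deterministic $r$-round algorithms that each make at most $q$ queries. Concretely, fixing $\mathcal{R}$'s internal random tape $\rho$ (drawn from some distribution $\mathcal{U}$) determines a deterministic algorithm $\mathcal{A}_\rho$; the adaptivity and query budget of $\mathcal{A}_\rho$ are inherited from $\mathcal{R}$, so $\mathcal{A}_\rho \in \deter(q,r)$ for every realization of $\rho$. The success probability of $\mathcal{R}$ on a fixed input $H$ is, by definition, $\Pr_\rho[\mathcal{A}_\rho \text{ succeeds on } H]$, and the \emph{success probability} of $\mathcal{R}$ (as in the statement of \Cref{thm: final-r-round-lower-bound}) is the infimum of this quantity over all valid inputs $H$.

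Next, I reduce worst case to average case. For any distribution $\mathcal{D}$ over inputs, the infimum over $H$ of a nonnegative quantity is at most its expectation under $\mathcal{D}$:
\[
\inf_{H} \Pr_\rho[\mathcal{A}_\rho \text{ succeeds on } H] \;\leq\; \Ex_{H \sim \mathcal{D}}\!\left[\,\Pr_\rho[\mathcal{A}_\rho \text{ succeeds on } H]\,\right].
\]
Since the success indicator is a bounded measurable function on the product space $(\rho, H) \sim \mathcal{U} \times \mathcal{D}$, Fubini's theorem lets me swap the order of integration:
\[
\Ex_{H \sim \mathcal{D}}\!\left[\,\Pr_\rho[\mathcal{A}_\rho \text{ succeeds on } H]\,\right] \;=\; \Ex_{\rho}\!\left[\,\Pr_{H \sim \mathcal{D}}[\mathcal{A}_\rho \text{ succeeds on } H]\,\right].
\]

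Finally, I plug in the hypothesis. For every realization of $\rho$, the deterministic algorithm $\mathcal{A}_\rho$ lies in $\deter(q,r)$, so by assumption $\Pr_{H \sim \mathcal{D}}[\mathcal{A}_\rho \text{ succeeds on } H] < \delta$. Taking expectation over $\rho$ preserves this bound, and chaining the above inequalities gives that the success probability of $\mathcal{R}$ is strictly less than $\delta$. The proof has no serious obstacle; the only subtle point worth flagging explicitly is the standing convention that ``$\mathcal{R}$ makes at most $q$ queries'' refers to a worst-case (not expected) query budget, which is what guarantees $\mathcal{A}_\rho \in \deter(q,r)$ for \emph{every} $\rho$ rather than merely in expectation.
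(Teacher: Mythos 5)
Your proof is correct and is the canonical averaging argument for Yao's minimax principle: view the randomized algorithm as a mixture over deterministic algorithms in $\deter(q,r)$, bound the worst-case success probability by the $\mathcal{D}$-average, swap the order of integration by Fubini, and invoke the hypothesis pointwise in $\rho$. The paper itself does not supply a proof of \Cref{lem: yao} --- it states the lemma as a standard black-box fact --- so there is no alternative argument in the paper to compare against; your write-up is simply the standard proof one would expect, and the one subtlety you flag (worst-case versus expected query budget, ensuring every $\mathcal{A}_\rho$ lands in $\deter(q,r)$) is exactly the right thing to be careful about.
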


 This motivates the following definition of input distributions, which are hard for deterministic algorithms.

\begin{definition}[$(d, m, q, r, \delta)$-Hard Distribution] Let $\mathcal{D}$ be a distribution  over subsets of $\{0,1\}^d$ that contain at most $m$ points. Such a distribution $\mathcal{D}$ is called $(d, m, q,r, \delta)$-hard if any algorithm $\mathcal{A} \in \deter(q,r)$ has at most a $\delta$-probability of learning a hidden set drawn from $\mathcal{D}$.
\end{definition}

It follows from Yao's lemma that if a $(d,m,q,r,\delta)$-hard distribution exists, then any $r$-round randomized algorithm with query complexity $q$ has success probability at most $\delta$. Next, we construct a hard distribution for non-adaptive algorithms.
\begin{lemma}[Hard Distribution for $1$-Round Algorithms]\label{lem: base-case}
For some constant $c \in (0,1)$, there exists a distribution $\mathcal{D}_1$ which is $(d, d^2,  2^{cd}, 1, 2^{-\Omega(d)})$-hard.  
\end{lemma}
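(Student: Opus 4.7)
The plan is to invoke Yao's principle: it suffices to construct a distribution $\mathcal{D}_1$ over hidden sets $H \subseteq \{0,1\}^d$ of size at most $d^2$ such that every deterministic non-adaptive algorithm making at most $2^{cd}$ queries recovers $H$ correctly with probability at most $2^{-\Omega(d)}$. Following the informal picture in \Cref{sec: intro}, I would sample $H$ as $H = H' \cup (H \setminus H')$, where $H'$ is a ``preventer'' subset and $H \setminus H'$ is a ``hidden'' subset, arranged so that for every $\vecq \in \{0,1\}^d$, some point of $H'$ is closer to $\vecq$ than every point of $H \setminus H'$ with probability $1 - 2^{-\Omega(d)}$ over $\mathcal{D}_1$. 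Under this property, the nearest-neighbor responses the algorithm observes will only ever reveal points of $H'$, leaving the identity of $H \setminus H'$ information-theoretically ambiguous.

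Fixing any deterministic non-adaptive $\mathcal{A}$ with queries $Q = \{\vecq_1, \ldots, \vecq_t\}$ of size $t \leq 2^{cd}$, the argument proceeds in two steps. First, I would establish a per-query shielding bound of the form $\Pr_{\mathcal{D}_1}[\text{response to } \vecq_j \in H \setminus H'] \leq 2^{-(c + \Omega(1))d}$ by comparing $\disth(\vecq_j, H')$ to $\disth(\vecq_j, H \setminus H')$ using the binomial and hypergeometric concentration estimates of \Cref{clm: bin-concentration,clm: min-binomial-concentration,clm: hypergeometric-concentration} together with the anti-concentration estimate of \Cref{clm: bin-anti-concentration}. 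A union bound over the $2^{cd}$ queries then shows that, with probability at least $1 - 2^{-\Omega(d)}$ over $\mathcal{D}_1$, every response lies in $H'$. Second, conditioned on this event, the response vector is a deterministic function of $(H', Q)$ alone, and the posterior distribution of $H \setminus H'$ retains support of size at least $2^{\Omega(d)}$; consequently any deterministic decoder outputs the correct $H$ with conditional probability at most $2^{-\Omega(d)}$. Composing the two bounds gives the lemma.

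The main obstacle is the per-query shielding estimate. Because $|H| \leq d^2$ is only polynomial, the covering radius of any choice of $H'$ in $\{0,1\}^d$ must be at least $d/2 - O(\sqrt{d \log d})$, and a naive uniformly random choice of $H'$ gives per-query shielding only of order $1/\poly(d)$, which is nowhere near the $2^{-\Omega(d)}$ needed to absorb a union bound against $2^{cd}$ queries. The construction must therefore couple $H'$ to $H \setminus H'$; a natural approach is to force $H'$ to contain all Hamming neighbors of every point of $H \setminus H'$, so that the Voronoi cell of each hidden point in $H$ collapses to the singleton containing itself, and the only query that ever returns a hidden point is a query equal to it---an event of probability at most $t / 2^d \leq 2^{(c-1)d}$ over the random placement of $H \setminus H'$. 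The delicate part is then to allocate the remaining points of $H'$ so that, even after observing all of $H'$ across many responses, the algorithm cannot distinguish the true $H \setminus H'$ from an exponential pool of equally-consistent candidates; this is typically arranged by populating $H'$ with ``decoy'' neighbors of a large candidate set for $H \setminus H'$. Reconciling exponential per-query shielding, $2^{\Omega(d)}$ residual ambiguity, and the $|H| \leq d^2$ size budget in a single joint distribution is the central technical challenge.
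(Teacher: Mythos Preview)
Your high-level plan---Yao, a shielding set $H'$ that intercepts every query, and residual exponential ambiguity in $H\setminus H'$---matches the paper exactly. But you stop short of a construction and explicitly flag ``reconciling exponential per-query shielding, $2^{\Omega(d)}$ residual ambiguity, and the $|H|\le d^2$ size budget'' as the unresolved central challenge. That is precisely the step the lemma is about, so as written the proposal is incomplete rather than incorrect.

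The paper resolves this tension with a construction that is a clean instantiation of your ``neighbors plus decoys'' idea, and it is worth seeing why there is in fact no tension at all. Pick a uniformly random center $\vecu\in\{0,1\}^d$, set $H' = N(\vecu,2)$ (all points at Hamming distance exactly $2$ from $\vecu$), and let the hidden part be a uniformly random subset $S\subseteq N(\vecu,1)$. Then $|H|\le \binom{d}{2}+d\le d^2$; whenever a query $\vecq$ satisfies $\disth(\vecq,\vecu)\ge 2$ one has $\disth(\vecq,N(\vecu,2))<\disth(\vecq,N(\vecu,1))$, so the response is determined by $\vecu$ alone; and the probability that a fixed $\vecq$ lands within distance $1$ of the random $\vecu$ is $(d+1)2^{-d}$, giving per-query shielding $2^{-\Omega(d)}$ by pure counting---no binomial or hypergeometric tail bounds needed. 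Conditioned on all queries being shielded, $S$ is still uniform over $2^d$ possibilities, so any deterministic output is correct with probability at most $2^{-d}$. The point you were missing is that the union of the Hamming neighborhoods of \emph{all} $d$ candidate points in $N(\vecu,1)$ is just $\{\vecu\}\cup N(\vecu,2)$, of size roughly $d^2/2$; so once you center everything at a single random $\vecu$, the decoys come for free and the size budget is automatically met.
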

\begin{proof}
For a point $\vecx \in \{0,1\}^d$ let $N(\vecx,r) \subset \hypercube$ be the set of points at Hamming distance exactly $r$ from $\vecx$.

The hard distribution $\mathcal{D}_1$ generates an input $H$ as follows:
\begin{enumerate}[label = (\roman*)]
    \item Sample a uniformly random point $\vecu$ from $\{0,1\}^d$. 
    \item Add all the points in $N(\vecu,2)$ to  $H$. 
    \item Add a uniformly random subset $S$ of $N(\vecu,1)$ to $H$. 
\end{enumerate}

We now show that for some constant $c \in (0,1)$, any algorithm  in $\deter(2^{cd},1)$  learns $H \sim \mathcal{D}_1$ with probability at most $2^{-\Omega(d)}$. The intuition is that if the number of queries is $2^{cd}$, all the queries of the algorithm will be at a distance greater than $2$ from $u$ with high probability. In this case, the points in $N(u, 2)$ ``block'' the algorithm from learning which subset $S$ of $N(u,1)$ was added to $H$ in step (iii). We formalize this argument below.

Let $\mathcal{A} \in \deter(s,1)$ be a non-adaptive deterministic algorithm that queries the points $\vecy_1, \ldots, \vecy_s$. Let $E$ be the event that for all $i \in [s]$ we have $\disth(\vecy_i, \vecu) \geq 2$. Since $\vecu$ is sampled uniformly at random, for a fixed $\vecy_i$, $\disth(\vecy_i, \vecu) \geq 2$ holds with probability $(1 - (1+d) 2^{-d}) \geq (1 - 2d\cdot 2^{-d})$. A union bound then gives $\Pr[\overline{E}] \geq (1 - 2sd \cdot 2^{-d})$.

Now, observe that if event $E$ occurs, for each query $\vecy_i$ we have $\disth(\vecy_i, N(\vecu,2)) < \disth(\vecy_i, N(\vecu,1))$. Moreover, since all the points in $N(\vecu,2)$ are in $H$, the responses to the queries are entirely determined by the choice of $\vecu$ (and, in particular, are independent of $S$).  It follows that if the event $E$ occurs, then the algorithm has at most a $2^{-d}$ probability of correctly guessing $S$. The success probability of the algorithm, therefore, is at most $\Pr[E] \cdot 2^{-d} + \Pr[\overline{E}] \leq 2^{-d} + 2sd \cdot 2^{-d} = 2^{-\Omega(d)}$ if $s \leq 2^{cd}$ for a sufficiently small constant $c \in (0,1)$. 
\end{proof}

We now show that hard distributions can be constructed recursively: a hard distribution for $(r+1)$ round algorithms can be constructed using a hard distribution for $r$-round algorithms. The new distribution will, however, be over points whose dimension is significantly larger (in fact, it will be approximately the cube of the dimension of the former).
\usetikzlibrary{decorations.pathreplacing}
\usetikzlibrary{arrows.meta}

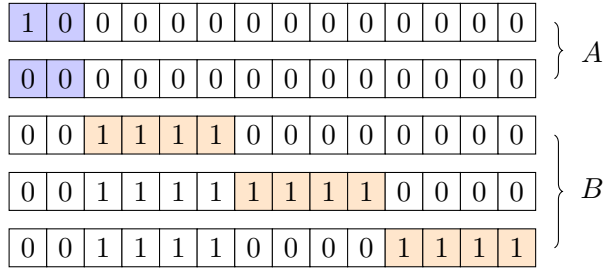
\begin{figure}\label{fig: tikz-lower-bound}

     \centering
  \begin{tikzpicture}
    \foreach \i/\val/\col in {0/1/1, 1/0/1, 2/0/0, 3/0/0, 4/0/0, 5/0/0, 6/0/0, 7/0/0, 8/0/0, 9/0/0, 10/0/0, 11/0/0, 12/0/0, 13/0/0} {
    \newcommand{\mycolor}{black} 
    \ifcase\col
        \renewcommand{\mycolor}{white}
    \or
        \renewcommand{\mycolor}{blue!20}
    \or
        \renewcommand{\mycolor}{blue}
    \or
        \renewcommand{\mycolor}{green}
    \else
        \renewcommand{\mycolor}{black}
    \fi
          
      \node[draw,  rectangle, fill = \mycolor, minimum width=0.5cm, minimum height=0.5cm] at (\i*0.5, 0) {\val};
    }

     \foreach \i/\val/\col in {0/0/1, 1/0/1, 2/0/0, 3/0/0, 4/0/0, 5/0/0, 6/0/0, 7/0/0, 8/0/0, 9/0/0, 10/0/0, 11/0/0, 12/0/0, 13/0/0} {
    \newcommand{\mycolor}{black} 
    \ifcase\col
        \renewcommand{\mycolor}{white}
    \or
        \renewcommand{\mycolor}{blue!20}
    \or
        \renewcommand{\mycolor}{blue}
    \or
        \renewcommand{\mycolor}{green}
    \else
        \renewcommand{\mycolor}{black}
    \fi
          
      \node[draw,  rectangle, fill = \mycolor, minimum width=0.5cm, minimum height=0.5cm] at (\i*0.5, -0.75) {\val};
    }

    \foreach \i/\val/\col in {0/0/0, 1/0/0, 2/1/2, 3/1/2, 4/1/2, 5/1/2, 6/0/0, 7/0/0, 8/0/0, 9/0/0, 10/0/0, 11/0/0, 12/0/0, 13/0/0} {
    \newcommand{\mycolor}{black} 
    \ifcase\col
        \renewcommand{\mycolor}{white}
    \or
        \renewcommand{\mycolor}{blue!20}
    \or
        \renewcommand{\mycolor}{orange!20}
    \or
        \renewcommand{\mycolor}{green}
    \else
        \renewcommand{\mycolor}{black}
    \fi
          
      \node[draw,  rectangle, fill = \mycolor, minimum width=0.5cm, minimum height=0.5cm] at (\i*0.5, -1.5) {\val};
    }

    \foreach \i/\val/\col in {0/0/0, 1/0/0, 2/1/0, 3/1/0, 4/1/0, 5/1/0, 6/1/2, 7/1/2, 8/1/2, 9/1/2, 10/0/0, 11/0/0, 12/0/0, 13/0/0} {
    \newcommand{\mycolor}{black} 
    \ifcase\col
        \renewcommand{\mycolor}{white}
    \or
        \renewcommand{\mycolor}{blue!20}
    \or
        \renewcommand{\mycolor}{orange!20}
    \or
        \renewcommand{\mycolor}{green}
    \else
        \renewcommand{\mycolor}{black}
    \fi
          
      \node[draw,  rectangle, fill = \mycolor, minimum width=0.5cm, minimum height=0.5cm] at (\i*0.5, -2.25) {\val};
    }

    \foreach \i/\val/\col in {0/0/0, 1/0/0, 2/1/0, 3/1/0, 4/1/0, 5/1/0, 6/0/0, 7/0/0, 8/0/0, 9/0/0, 10/1/2, 11/1/2, 12/1/2, 13/1/2} {
    \newcommand{\mycolor}{black} 
    \ifcase\col
        \renewcommand{\mycolor}{white}
    \or
        \renewcommand{\mycolor}{blue!20}
    \or
        \renewcommand{\mycolor}{orange!20}
    \or
        \renewcommand{\mycolor}{green}
    \else
        \renewcommand{\mycolor}{black}
    \fi
          
      \node[draw,  rectangle, fill = \mycolor, minimum width=0.5cm, minimum height=0.5cm] at (\i*0.5, -3) {\val};
    }

       \draw[decorate,decoration={brace,amplitude=3pt}] (7,0) -- (7,-0.75);
        \node at (7.5, -0.4) {$A$};

       \draw[decorate,decoration={brace,amplitude=3pt}] (7,-1.5) -- (7,-3);
       \node at (7.5, -2.2) {$B$};

  \end{tikzpicture}
    \caption{The figure illustrates how a hard distribution for $r$ round algorithms can be used to create a hard distribution for $r+1$ round algorithms.}
\end{figure}

\begin{lemma}[Inductive Step]\label{lem: induction-step}
    Suppose that a $(t, m, q,r,\delta_1)$-hard distribution exists; then a $(t', m', q ,r+1, \delta_1 + \delta_2)$-hard distribution also exists for some $t' \leq 2500t^2(t + \log_2 (q/\delta_2))+t$ and $m' \leq m + 25(t+\log_2(q/\delta_2))$.
\end{lemma}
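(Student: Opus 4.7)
The plan is to bootstrap the $(r+1)$-round hard distribution $\mathcal{D}'$ from the given $r$-round hard $\mathcal{D}$ via a ``lift and blockers'' construction: a fresh draw $H_0 \sim \mathcal{D}$ is embedded into a much larger cube $\{0,1\}^{t'}$, together with a small set of at most $25(t + \log_2(q/\delta_2))$ blocker points. The blockers are engineered so that, with probability $\geq 1 - \delta_2$, every round-$1$ query of the $(r+1)$-round algorithm is served by a blocker whose response depends only on the public randomness of the construction (not on $H_0$). Conditioned on this clean event, the remaining $r$ rounds reduce to an $r$-round algorithm on $H_0$ in $\{0,1\}^t$, which by $(t,m,q,r,\delta_1)$-hardness succeeds with probability at most $\delta_1$; a union bound then gives the claimed $\delta_1+\delta_2$ failure bound.

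Concretely, to sample $H\sim\mathcal{D}'$ I would draw a uniformly random partition of $[t']$ into an ``inner'' set $C$ of size $t$, a ``hub'' $H_{\mathrm{hub}}$ of size $\Theta(t^2)$, and $k=\Theta(t+\log_2(q/\delta_2))$ disjoint ``spokes'' $S_1,\ldots,S_k$ each of size $\Theta(t^2)$, together with an independent uniformly random center $\vecu \in \{0,1\}^{t'}$ and $H_0 \sim \mathcal{D}$. Each $\vecx \in H_0$ is lifted to the embedded point $\tilde{\vecx}$ that agrees with $\vecu$ outside $C$ and encodes $\vecx$ (relative to $\vecu$) on $C$, and each offset $\vec{\sigma}_j$ in a prescribed ``hub-and-spokes'' family (a hub offset $\mathbf{1}_{H_{\mathrm{hub}}}$ plus spoke offsets $\mathbf{1}_{H_{\mathrm{hub}}\cup S_i}$ for $i\in[k]$, as depicted in the figure) yields a blocker $\vecu \oplus \vec{\sigma}_j$. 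The resulting dimension is $t'\leq 2500\,t^2(t+\log_2(q/\delta_2))+t$ and $m' \leq m + 25(t+\log_2(q/\delta_2))$, matching the claim.

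Fix a deterministic $(r+1)$-round algorithm $\mathcal{A}\in\deter(q,r+1)$ with first batch $B_1$. Because $B_1$ is chosen before any randomness of the construction is revealed, setting $\vec{w}=\vecq\oplus\vecu$ makes $\vec{w}$ uniform on $\{0,1\}^{t'}$, and the ``blocker-wins'' conditions reduce to inequalities of the form $\wt(\vec{w}|_{H_{\mathrm{hub}}})+\wt(\vec{w}|_{S_i}) > (|H_{\mathrm{hub}}|+|S_i|)/2 - O(t)$. Conditional on $\wt(\vec{w}|_{H_{\mathrm{hub}}})$ the spoke weights are independent, so Hoeffding (\Cref{clm: bin-concentration}) combined with a simple $2^{-k}$-style calculation gives that for each fixed $\vecq\in B_1$, at least one of the $k+1$ blockers is strictly closer to $\vecq$ than every embedded $\tilde{\vecx}$ with probability $\geq 1-\delta_2/q$. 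Union-bounding over $B_1$, the good event $\mathcal{E}$ that every round-$1$ response is a blocker -- and, crucially, is a deterministic function of $\vecu$ and the partition alone -- holds with probability at least $1-\delta_2$. Conditioned on $\mathcal{E}$, the state of $\mathcal{A}$ entering round $2$ is independent of $H_0$ and of the bijection between hypercube points and elements of $H_0$; the remaining $r$ adaptive rounds then simulate a deterministic $r$-round algorithm on $H_0 \sim \mathcal{D}$ via the faithful reduction $\vecq \mapsto (\vecq \oplus \vecu)|_C$, so by $(t,m,q,r,\delta_1)$-hardness this induced algorithm succeeds with probability $\leq \delta_1$. Total success probability is therefore at most $\Pr[\overline{\mathcal{E}}]+\delta_1\leq \delta_1+\delta_2$.

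The main technical obstacle is the per-query blocker-shielding argument: the hub offset alone only wins with probability $\approx 1/2$ against a random $\vec{w}$, so the $k$ spokes are essential for driving the failure probability down to $\delta_2/q$. The hub handles queries whose overlap with $H_{\mathrm{hub}}$ is near its mean (using hypergeometric concentration from \Cref{clm: hypergeometric-concentration}), while the $\Theta(\log(q/\delta_2))$ worth of spokes covers the residual atypical cases via anti-concentration (\Cref{clm: bin-anti-concentration}); the spoke size $\Theta(t^2)$ is chosen large enough that a typical alignment tips the distance comparison by more than the $t$ units an embedded point can leverage from $\disth(\vec{w}|_C,\vecx)$. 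Balancing the hypergeometric tails to make the union bound tighten to exactly the promised $\delta_2$ is the bulk of the proof.
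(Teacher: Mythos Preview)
Your high-level plan matches the paper's exactly: embed $H_0\sim\mathcal{D}$ into a larger cube, XOR-shift by a uniform $\vecu$, add $\Theta(t+\log_2(q/\delta_2))$ blocker points of weight $\Theta(t^2)$ so that every first-round query is answered by a blocker with probability $\geq 1-\delta_2/q$, and then reduce rounds $2,\ldots,r+1$ to an $r$-round algorithm on $H_0$ via $\vecq\mapsto(\vecq\oplus\vecu)|_C$. The main difference is that the paper's blockers $\vecx_1,\ldots,\vecx_{m_1}$ have \emph{pairwise disjoint} supports (each the indicator of its own interval of length $\ell=100t^2$), which makes the events ``$\vecx_i$ beats a fixed target $\vecy$'' mutually independent and the shielding argument a one-liner: each succeeds with probability $\geq 1/30$ by anti-concentration, so all fail with probability $\leq(29/30)^{m_1}<2^{-t}q^{-1}\delta_2$. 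Your hub-and-spokes variant introduces correlation through the shared hub; it can be pushed through by conditioning on the hub overlap, but the hub contributes nothing and your description of its role is off (the hub blocker wins only when the overlap exceeds $|H_{\mathrm{hub}}|/2+t$, which is the \emph{above}-mean regime, not the near-mean one). The random partition is likewise unnecessary---the paper fixes the coordinate layout and lets the uniform $\vecu$ alone make $\vecq\oplus\vecu$ uniform.

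Two small technical corrections. First, your blocker-wins inequality has the wrong sign: to beat every possible embedded point you need $\wt(\vec{w}|_{\support(\vec{\sigma}_i)})>|\support(\vec{\sigma}_i)|/2 + t$, not ``$-O(t)$''. Second, for the conditioning step to preserve the law of $H_0$, the good event $\mathcal{E}$ must be that some blocker beats \emph{every} point of the full $2^t$-cube on the inner coordinates, not merely the actually embedded $\tilde{\vecx}$; otherwise $\mathcal{E}$ depends on $H_0$ and conditioning on it distorts the distribution. The paper handles this by union-bounding the per-point failure $(29/30)^{m_1}$ over all $2^t$ such points, which is exactly why $m_1$ (and hence $m'$) carries the additive $t$ term in the statement.
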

\begin{proof}

    Let $\mathcal{D}_r$ be the $(t, m, q,r,\delta_1)$-hard distribution.  Let $t' = 2500t^2(t + \log_2 (q/\delta_2))+t$. 
    The hard distribution $\mathcal{D}_{r+1}$ generates an input $H_{r+1} \subset \{0,1\}^{t'} $ as follows (see the illustration above):
    \begin{enumerate}[label = (\roman*)]
        \item Sample $H_r \subset \{0,1\}^t$ from the distribution $\mathcal{D}_r$. 
        \item Let $A \subset \{0,1\}^{t'}$ be the set of points obtained by padding $(t'- t)$ zeros to the end of each point in $H_r$.
        \item Let $\ell = 100 t^2$ and  $m_1 = 25(t + \log_2(q/\delta_2))$. Define $B = \{\vecx_1, \ldots, \vecx_{m_1}\}$ where $\vecx_i$ is given by:
			$$\vecx_i[j] =\begin{cases}
			1 & \text{If  } j \in  [t + (i-1)\cdot \ell +1, \,  t + i\cdot \ell]\\
			0 & \text{Otherwise} \\
            \end{cases}$$
        \item For $\vecu$ sampled uniformly at random from $\{0,1\}^{t'}$ let $A_u = \{\vecu \oplus \vecx \, | \, \vecx \in A \}$ and $B_u = \{\vecu \oplus \vecx \, | \, \vecx \in B \}$ where $\oplus$ denotes the bitwise-XOR operation.
        \item Let $H_{r+1} = A_u \cup B_u$.   
    \end{enumerate}

    It is easy to check that the bounds on the size of $H_{r+1}$ and the dimension $t'$ satisfy the lemma statement.

    Next, we prove that $\mathcal{D}_{r+1}$ is indeed a hard distribution for $(r+1)$-round adaptive algorithms. Consider an algorithm in  $\deter(r+1, q)$ and an input $H_{r+1}$ drawn from $\mathcal{D}_{r+1}$. We show that after the first round of queries, with high probability, the algorithm gains no information about the set $H_r$. However, the input $H_r$ is itself drawn from a distribution that is hard for $r$-round algorithms; thus the algorithm has a low probability of learning $A$ in the remaining $r$ rounds.

     Let $C \subset \{0,1\}^{t'}$ denote the set of $2^t$ points whose support is a subset of the first $t$ coordinates. The main claim we show is that if $\vecz$ is a uniformly random point in $\{0,1\}^{t'}$, then its distance to $B$ is less than its distance to any point in $C$. To prove this claim, we will require the following simple fact.

    	\begin{claim}\label{clm : nearer-condition-nice}
     Let $\vecx$ be a point in $B$ and $\vecy$ be a point in $C$. If $\vecz$ is a point in  $\{0,1\}^{t'}$ satisfying $|\support(\vecx) \cap \support(\vecz)| > \ell/2 + t$ then  $\disth(\vecz,\vecx) < \disth(\vecz,\vecy)$.
            \end{claim}
    \begin{proof}
            Note that we have the following formula for the Hamming distance between $\veca, \vecb \in \{0,1\}^{t'}$: $\disth(\veca,\vecb) = \wt(\veca) + \wt(\vecb) - 2|\support(\veca) \cap \support(\vecb)|$.
            Using this formula and the fact that $|\support(\vecx)| = \ell$ and $0 \leq |\support(\vecy)| \leq t$ we obtain, 
            \begin{align*}
            &\disth(\vecz,\vecx) - \disth(\vecz,\vecy) \\
                &= \wt(\vecx) - 2|\support(\vecx) \cap \support(\vecz)|- \wt(\vecy) + 2|\support(\vecy) \cap \support(\vecz)|  \\
                &< \ell - 2\cdot (\ell/2 + t) - 0 + 2t = 0.\qedhere
        \end{align*} 
    \end{proof}    

    We now prove the main claim of the argument.
    \begin{claim}\label{lem: far-from-ball}
        If $\vecz$ is sampled uniformly at random from $\{0,1\}^{t'}$ then $\disth(\vecz, C) > \disth(\vecz, B)$ with probability at least $1 - \delta_2/q$.
    \end{claim}
    \begin{proof}
        Fix a point  $\vecy \in C$. We shall show that for a uniformly random point $\vecz$, we have $\disth(\vecz,\vecy) \leq \disth(\vecz,B)$ with probability at most $ 2^{-t} q^{-1}\delta_2$. The claim then follows by a union bound over the $|C| = 2^t$ points in $C$.

         Let $\vecx$ be a point in $B$. Since $\vecz$ is chosen uniformly at random from $\{0,1\}^{t'}$ and $\wt(\vecx) = \ell$, we have that $|\support(\vecx) \cap \support(\vecz)|$ is a $\text{Bin}(\ell, 1/2)$ random variable. Using  \Cref{clm: bin-anti-concentration}, the probability that $|\support(\vecx) \cap \support(\vecz)| \geq (\ell/2 + 2t)$ is at least $\tfrac{1}{15} \exp(-64t^2/\ell) = \tfrac{1}{15} \exp(-16/25) > 1/30$. It follows from \Cref{clm : nearer-condition-nice} that $\disth(\vecz,\vecx) < \disth(\vecz,\vecy)$ holds with probability at least  $1/30$. Since $B = \{\vecx_1, \ldots, \vecx_{m_1}\}$ consists of points with pairwise disjoint support, the random variables ${|\support(\vecx_i) \cap \support(\vecz)|}$ corresponding to different $i$ are mutually independent. Thus, the probability that $\disth(\vecz,B) > \disth(\vecz,\vecy)$  is at most $ (29/30)^{m_1} <  2^{-t} q^{-1}\delta_2 .\qedhere$
        
    \end{proof}
    We now use the claim above to show that after the first round of queries, with high probability, the algorithm learns no information about the set of points $H_{r}$ sampled in line (i).

    Suppose that the algorithm queries the points $\vecz_1, \ldots, \vecz_s$ in the first round. Let $E$ be the event that $\disth(\vecz_i, C) > \disth(\vecz_i, B_u)$ holds for all $\vecz_i$. Observe that $\disth(\vecz_i, C) > \disth(\vecz_i, B_u)$ holds iff $\disth(\vecz_i \oplus \vecu, C) > \disth(\vecz_i \oplus \vecu, B)$. Since $\vecz \oplus \vecu$ is distributed uniformly over $\{0,1\}^{t'}$, it follows from \Cref{lem: far-from-ball}, a union bound that $\Pr[\overline{E}] \leq \delta_2 s/q \leq \delta_2$.

    Conditioned on event $E$, the responses to the queries in round $1$ are determined by $\vecu$ and specifically are independent of the set $A_u \subset C$ and thus are also independent of the points $H_r$ sampled in line (i). Therefore, if the event $E$ occurs, the conditional distribution of $H_r$ after the first round of queries is stochastically identical to the distribution of $H_r$ prior to the queries.

     Even if the algorithm is informed what $\vecu$ is after the first round of queries, it still has to learn the set of points $H_{r}$ in the remaining $r$ rounds. Since the points in $H_r$ are from a $(t,m,q,r, \delta_1)$-hard distribution, the algorithm learns $H_r$ with probability at most $\delta_1$. 
    Therefore the success probability of the algorithm is bounded by $\Pr[E] \cdot \delta_1 + \Pr[\overline{E}] \leq \delta_1 + \delta_2.\qedhere$
\end{proof}

We now combine \Cref{lem: base-case} and \Cref{lem: induction-step} to complete the inductive proof and show the existence of a certain hard distribution against $r$-round adaptive algorithms. 

\begin{lemma}\label{lem: final-hard-distribution}
For any integer $t \geq c_1$ and an integer $r$ satisfying $1 \leq r \leq  t$  there exists a distribution $\mathcal{D}_r$ which is $((100t)^{3^{r-1}}, t^{3^{r}} , 2^{ct},r, 2/3) $-hard where $c_1$ is a sufficiently large constant and $c$ is a constant in $(0,1)$.
\end{lemma}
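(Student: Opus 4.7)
The plan is to prove the lemma by induction on $r$, using \Cref{lem: base-case} as the base case ($r=1$) and \Cref{lem: induction-step} to go from $r$ to $r+1$. Throughout, we maintain the invariants that $\mathcal{D}_r$ is supported on $\{0,1\}^{T_r}$ with $T_r \leq (100t)^{3^{r-1}}$, contains at most $M_r \leq t^{3^r}$ points, is hard against $q = 2^{ct}$ queries over $r$ rounds, and that the accumulated failure probability $\delta_r$ stays below $2/3$. The query budget $q$ is preserved exactly by the inductive step (it appears on both sides of Lemma's statement), so only the dimension, size, and failure probability need to be tracked.

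For the base case, apply \Cref{lem: base-case} with its dimension parameter set to $100t$; this yields a $(100t, (100t)^2, 2^{\Omega(t)}, 1, 2^{-\Omega(t)})$-hard distribution, which is $(100t, t^3, 2^{ct}, 1, 2/3)$-hard for $t \geq c_1$ after rescaling the absolute constant $c$, since $(100t)^2 \leq t^3$ and $2^{-\Omega(t)} \leq 2/3$. For the inductive step, given a $(T_r, M_r, 2^{ct}, r, \delta_r)$-hard distribution, apply \Cref{lem: induction-step} with auxiliary parameter $\delta_2 := 1/(3r^2)$ (any choice making $\sum_{i} \delta_i$ a geometric series summing to less than $2/3$ works). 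The step outputs a distribution that is $(T_{r+1}, M_{r+1}, 2^{ct}, r+1, \delta_r + \delta_2)$-hard, where
\begin{align*}
T_{r+1} &\leq 2500\, T_r^{\,2}\bigl(T_r + ct + O(\log r)\bigr) + T_r = O(T_r^{\,3}),\\
M_{r+1} &\leq M_r + 25\bigl(T_r + ct + O(\log r)\bigr) = M_r + O(T_r).
\end{align*}
The size bound $M_{r+1} \leq t^{3^{r+1}}$ is immediate since $M_{r+1} \leq M_1 + O(\sum_{i \leq r} T_i)$ grows only polynomially in $t$ and $r$ while the target grows triply-exponentially in $r$. The accumulated failure is $\delta_r \leq \delta_1 + \sum_{i < r} \delta_2^{(i)} \leq 2/3$ for $r \leq t$ and $t \geq c_1$ large.

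The main obstacle is the dimension bookkeeping: \Cref{lem: induction-step} produces $T_{r+1} \lesssim 2500\, T_r^{\,3}$, while the invariant $T_{r+1} \leq (100t)^{3^r} = T_r^{\,3}$ (with equality at $T_r = (100t)^{3^{r-1}}$) has no slack for the constant factor $2500$. This is handled either by (a) allowing the base constant in the invariant to be a larger absolute constant $K$ rather than $100$ (which only changes the hidden constant in the final theorem), or equivalently (b) by noting that for $r = O(\log\log d)$ the accumulated slack $(2500)^r$ is polylogarithmic in $d$ and therefore absorbed into the exponent $3^{-(r-1)}$ of the $\exp(\Omega(d^{3^{-(r-1)}}))$ lower bound in \Cref{thm: final-r-round-lower-bound}. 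Concretely, one can simply restate the invariant as $T_r \leq (Kt)^{3^{r-1}}$ for a fixed constant $K \geq (2500)^{1/2}$ say (with the base case handled by applying \Cref{lem: base-case} with dimension $Kt$ instead of $100t$); the statement of the lemma with the constant $100$ then follows by further enlarging $c_1$ and rescaling $t$.
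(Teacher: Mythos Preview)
Your overall strategy matches the paper's: induct on $r$, using \Cref{lem: base-case} for the base and \Cref{lem: induction-step} for the step, while tracking dimension, size, and failure probability. The failure-probability accounting is fine, and the size bound $M_{r+1}\le t^{3^{r+1}}$ does hold (though not because $\sum_i T_i$ is ``polynomial in $t$ and $r$''---it isn't, since already $T_i$ is of order $t^{3^{i-1}}$; rather, $M_r + O(T_r)$ is dominated by $t^{3^{r+1}}$ because $T_r \le (100t)^{3^{r-1}} \ll t^{3^{r+1}}$).

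The genuine gap is in the dimension bookkeeping. You correctly spot that the step yields roughly $T_{r+1}\le 2500\,T_r^{\,3}$ while the invariant $T_r\le(100t)^{3^{r-1}}$ has no slack for the factor $2500$. But your fix~(a) does not repair this: if the invariant is $T_r\le(Kt)^{3^{r-1}}$, then $T_{r+1}\le 2500\,T_r^{\,3}\le 2500\,(Kt)^{3^r}$, and you would still need $2500\le 1$ regardless of $K$. Enlarging the base constant buys nothing because the extra factor sits \emph{outside} the cube at every step. Your fix~(b) salvages \Cref{thm: final-r-round-lower-bound} but does not prove the lemma as stated.

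The paper's device is to carry an invariant with slack built into the \emph{exponent}: it maintains $d(i)\le 20^{\,1.5(3^{i-1}-1)}\,t^{3^{i-1}}$ (note the offset $-1$). First, using $\log_2(q/\delta_2)\le 2t\le 2d(j)$, the paper simplifies the step to $d(j+1)\le 2500\cdot 3\,d(j)^3 + d(j)\le (20\,d(j))^3$. Then
\[
(20\,d(j))^3 \;\le\; 20^{3}\cdot 20^{\,4.5(3^{j-1}-1)}\,t^{3^{j}}
\;=\; 20^{\,1.5(3^{j}-1)}\,t^{3^{j}},
\]
closing the induction exactly: the factor $20^{3}$ is absorbed precisely by the gap left when cubing $20^{-1.5}$. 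At $i=r$ one has $20^{\,1.5(3^{r-1}-1)}\le(20^{1.5})^{3^{r-1}}<100^{3^{r-1}}$, giving $d(r)\le(100t)^{3^{r-1}}$. Correspondingly, the base case uses \Cref{lem: base-case} with dimension $t$ (not $100t$), matching $d(1)\le t$.
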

\begin{proof}
    We shall prove this by the following induction. For each $i \in [r]$, we show that there is a $(d(i), m(i) , 2^{ct}, i, i/3r)$-hard distribution where \[d(i) \leq 20^{1.5 (3^{i-1}-1)} \cdot t^{3^{i-1}} \text{and } m(i) \leq t^{3^{i}}.\]  The lemma then follows by setting $i = r$, since $d(r) \leq 20^{1.5 (3^{r-1}-1)} \cdot t^{3^{r-1}} \leq (100t)^{3^{r-1}}$.

The bounds on $m(1)$ and $d(1)$ in the base case $i = 1$ follow from \Cref{lem: base-case} which guarantees the existence of $(t, t^2, 2^{ct}, 1, 2^{-\Omega(t)})$-hard distribution. Suppose the claim holds until $i = j$ for some $j \geq 1$. 

We now use \Cref{lem: induction-step} to perform the induction step. Setting $\delta_2 = 1/r \geq 1/ t$ in this lemma, we deduce that there is a $(d(j+1), m', 2^{ct}, j+1, (j+1)/r)$-hard distribution with $d(j+1) \leq 2500 d(j)^2 (d(j) + \log_2(q/\delta_2)) + d(j)$. Noting that $q \leq 2^{t}$ and $1/\delta_2 \leq t \leq 2^t$ we conclude that $\log_2(q/\delta_2) \leq 2t \leq 2 d(j)$. This implies that  \[d(j+1) \leq 2500 d(j)^2 (3 d(j)) + d(j) \leq (20 d(j))^3.\] Using the induction hypothesis for $i = j$, we further conclude that \[d(j+1) \leq (20 \cdot  20^{1.5 (3^{j-1}-1)} \cdot t^{3^{j-1}})^3 \leq 20 ^ {1.5 (3^{j} - 1)} \cdot t^{3^{j}}. \]
Next, we use \Cref{lem: induction-step} to bound $m(j+1)$. We have,
\begin{align*}
m(j+1) \leq m(j) + 25(d(j) + \log_2(q/\delta_2)) 
\leq m(j) + 75 d(j) \leq t^{3^{j}} + 75 (100t)^{3^{j-1}} \leq t^{3^{j+1}}
\end{align*}
where the last inequality holds for $t$ sufficiently large.\qedhere
\end{proof}

We now complete the proof of \Cref{thm: final-r-round-lower-bound}.

\begin{proof}[Proof of \Cref{thm: final-r-round-lower-bound}]
    Pick $d$ sufficiently large and $r = O(\log \log d)$ so that $r \ll d^{3^{-(r-1)}}$.  Setting  $t = (d^{3^{-(r-1)}})/100$ in \Cref{lem: final-hard-distribution} shows the existence of a $(d, d^3 , 2^{{O(d^{3^{-(r-1)}})}},r , 1/3)$-hard distribution. \Cref{thm: final-r-round-lower-bound} then follows by Yao's lemma.\Qed{\Cref{thm: final-r-round-lower-bound}}
    
\end{proof}

\section{A Deterministic Algorithm for \texorpdfstring{\Cref{prob: unitsphere}}{Problem 3}}\label{sec: rd-algorithm}

Let $S^{d-1}$ denote the unit sphere in $\R^d$. In \Cref{prob: unitsphere}, the codebreaker wishes to learn a hidden set $H \subseteq 
S^{d-1}$ consisting of $n$ points $\hidden = \{\vecx_1, \ldots, \vecx_n\}$. The query model allows the codebreaker to query a unit vector $\vecq$ of their choice in response to which they learn a point $\vecx$ in  $\hidden$ which has the least Euclidean distance to $\vecq$ (with ties being broken adversarially). In other words,  $\vecx = \argmin\limits_{\vecx_i \in \hidden} \norm{\vecx_i - \vecq}_2$. Note that since
\[\norm{\vecx-\vecq}_2^2 = \norm{\vecx}_2^2 + \norm{\vecq}_2^2 -2\cdot\langle \vecx,\vecq\rangle = 2  -2\cdot\langle \vecx,\vecq\rangle, \]
the point $\vecx$ also has the maximum inner product with $\vecq$ among points in $\hidden$. We now present a deterministic algorithm that learns the set $\hidden$ using $n^{O(d)}$ queries using $O(n + d)$ rounds of adaptivity.

\noindent\textbf{Overview of \Cref{alg: rd}:} Suppose that the points in $\hidden$ have rank $k$. The algorithm first finds a basis $B$ of $k$ linearly independent points in $\hidden$. This is done by repeatedly querying points in the orthogonal complement of points discovered thus far until no new points are discovered. Henceforth, the algorithm only queries points in $\spn{B}$, effectively reducing the problem to $k$ dimensions. In each iteration, the algorithm constructs a convex hull of the points of $\hidden$ found so far and queries the normal vectors of all the $(k-1)$-dimensional faces of the convex hull thus constructed. If no new points are recovered, the algorithm terminates.

\begin{algorithm}[tb]
   \caption{Deterministic Algorithm for Points on $\unitsphere$.}
   \label{alg: rd}
\begin{algorithmic}[1]
    \STATE Initialize $B := \phi$. \COMMENT{When the loop on line $2$ terminates,  $B$ is a basis of $H$}.
    \REPEAT
    \STATE Find any orthonormal basis $\{\vecv_1, \ldots, \vecv_t\}$ of $\spn{B}^{\perp}$.
    \STATE Query $\vecv_j$ and $-\vecv_j$ for $1 \leq j \leq t$.  Let $C\subseteq \hidden$ be the points learned. 
     \IF{there exists a point $\vecx$ in $C \subseteq H$ such that $\langle \vecx, \vecv_j\rangle \neq 0$ for some $\vecv_j$}
    \STATE $B = B \cup \{\vecx\}$.
    \ENDIF
    \UNTIL{No new point is added to $B$}
       
    \STATE Define $k:= |B|$. \COMMENT{ $k$ is the rank of points in $\hidden$.}

    \STATE Initialize $\hat{\hidden} := B$ 
    
    \COMMENT{When the loop on line $11$ terminates $\hat{\hidden}$ will equal $H$. 
 This loop finds all points in $\hidden - \texttt{}B$.}
    \REPEAT
    \STATE Find all the $(k-1)$ dimensional faces of $\conv(\hat{\hidden})$.
    \STATE Let $D$ be the set of unit vectors in the $\spn{\hat{\hidden}}$ 
     normal to some $(k-1)$-face of the $\chull{\hat{\hidden}}$.
     
     \COMMENT{Note that  each $(k-1)$-face of $\conv(\hat{\hidden})$ has two unit normal vectors.}

     \STATE Let $E \subseteq H$ denote the set of new points learned on querying points in $D$.
     \STATE $\hat{\hidden} = \hat{\hidden} \cup E$.
    \UNTIL{$|E| = 0$}
    
    \STATE Output $\hat{\hidden}$.

\end{algorithmic}
\end{algorithm}

\begin{theorem}\label{thm: unit-vector}
\Cref{alg: rd} recovers the set $\hidden \subset \unitsphere$ containing $n$ points  with rank $k$ while making at most $O(n^{\floor{\frac{k}{2}}+1} + kd )$ queries.
\end{theorem}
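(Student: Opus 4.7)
The plan is to split the argument into a correctness proof (showing the returned $\hat{\hidden}$ equals $\hidden$) and a query-count analysis, handling the two loops in turn.

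First I would verify that when the loop on lines~2--8 terminates, $B$ is a linear basis of $\spn{\hidden}$. The key observation is that, since a query $\vecq$ returns $\argmax_{\vecy \in \hidden}\langle \vecy, \vecq\rangle$, if $B$ fails to span some $\vecx \in \hidden$ then $\vecx$ has a nonzero component in $\spn{B}^{\perp}$, so $\langle \vecx, \vecv_j\rangle \neq 0$ for some basis vector $\vecv_j$ of $\spn{B}^{\perp}$. Querying whichever of $\pm \vecv_j$ has positive inner product with $\vecx$ returns a point $\vecy \in \hidden$ with $|\langle \vecy, \vecv_j\rangle| \geq |\langle \vecx, \vecv_j\rangle| > 0$, hence $\vecy \notin \spn{B}$, and $\vecy$ is added to $B$. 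So the loop cannot terminate until $\spn{B} = \spn{\hidden}$, at which point $|B| = k$.

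Next I would show that the loop on lines~11--16 terminates with $\hat{\hidden} = \hidden$. Suppose for contradiction that some $\vecx \in \hidden \setminus \hat{\hidden}$ exists at termination. Because $B \subseteq \hat{\hidden}$, we have $\vecx \in \spn{B} = \spn{\hat{\hidden}}$. Since every point of $\hidden$ is a unit vector and the Euclidean norm is strictly convex, $\vecx$ cannot be written as a convex combination of distinct unit vectors in $\hat{\hidden}$; hence $\vecx \notin \chull{\hat{\hidden}}$. Writing $\chull{\hat{\hidden}}$ inside $\spn{\hat{\hidden}}$ as an intersection of half-spaces indexed by its $(k-1)$-faces, there must exist a facet with outward unit normal $\vecv \in D$ satisfying $\langle \vecv, \vecx\rangle > b \geq \max_{\vecy \in \hat{\hidden}} \langle \vecv, \vecy\rangle$. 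Then querying $\vecv$ returns a point of $\hidden$ with inner product strictly greater than any point of $\hat{\hidden}$, so this point lies in $\hidden \setminus \hat{\hidden}$, contradicting $|E|=0$.

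For the query count, the first loop runs for at most $k+1$ iterations (each non-terminating iteration adds a vector to $B$, so $|B|$ strictly increases), and each iteration issues at most $2(d-|B|) = O(d)$ queries, giving $O(kd)$ in total. For the second loop, each non-terminating iteration adds at least one new point to $\hat{\hidden}$, so there are at most $n - k + 1 = O(n)$ iterations. Inside each iteration the algorithm queries the two unit normals of every $(k-1)$-face of $\chull{\hat{\hidden}}$; since $|\hat{\hidden}| \leq n$ and $\dim \spn{\hat{\hidden}} = k$, McMullen's Upper Bound Theorem yields at most $O(n^{\floor{k/2}})$ such facets, hence $O(n^{\floor{k/2}})$ queries per iteration. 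Multiplying gives $O(n^{\floor{k/2}+1})$ for the second loop, and combining produces the advertised bound $O(n^{\floor{k/2}+1} + kd)$.

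The main obstacle will be the separation step in the second loop: producing, from a hidden unit vector $\vecx \notin \hat{\hidden}$, a facet-normal that the oracle is guaranteed to answer with a new point. This rests on two facts that must be invoked cleanly: strict convexity of $\|\cdot\|_2$ on $\unitsphere$ (to place $\vecx$ strictly outside $\chull{\hat{\hidden}}$) and the $H$-representation of a polytope (to upgrade this to a strict inequality on some facet normal). Once this is in place, the remaining ingredients---basis completion, iteration counting, and the face-count bound via the Upper Bound Theorem---are routine.
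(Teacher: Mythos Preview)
Your proposal is correct and follows essentially the same approach as the paper: the paper also splits into correctness and query-count lemmas, uses strict convexity of the Euclidean norm on $\unitsphere$ (via Cauchy--Schwarz) to place an undiscovered point strictly outside $\chull{\hat{\hidden}}$, invokes a separating facet to guarantee a new point, and bounds the facet count by the Upper Bound Theorem. Your treatment of the first loop is more explicit than the paper's (which simply asserts the basis property is ``easy to see''), but the overall structure and key ideas match.
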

\noindent It is easy to see that at the end of the first loop, \Cref{alg: rd}, recovers a basis $B \subseteq \hidden$. We now prove that it recovers all remaining points in the second loop. First, we prove a simple fact.
\begin{fact}\label{fact: intersect}
	Let $T$ be a finite subset of $\unitsphere$; then $\conv(T) \cap \unitsphere = T$.
\end{fact}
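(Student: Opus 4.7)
The plan is to prove the two inclusions $T \subseteq \conv(T) \cap \unitsphere$ and $\conv(T) \cap \unitsphere \subseteq T$ separately. The first inclusion is immediate: every point of $T$ lies in $T \subseteq \unitsphere$ by assumption, and trivially in $\conv(T)$ as a convex combination of itself.

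For the reverse inclusion, the key idea is to exploit strict convexity of the Euclidean norm. Suppose $\vecx \in \conv(T) \cap \unitsphere$, so we may write $\vecx = \sum_{i=1}^m \lambda_i \vect_i$ for some points $\vect_1,\ldots,\vect_m \in T$ with $\lambda_i \geq 0$ and $\sum_i \lambda_i = 1$, where we may assume each $\lambda_i > 0$. By the triangle inequality,
\[
1 \;=\; \norm{\vecx}_2 \;=\; \Bigl\|\sum_{i=1}^m \lambda_i \vect_i\Bigr\|_2 \;\leq\; \sum_{i=1}^m \lambda_i \norm{\vect_i}_2 \;=\; \sum_{i=1}^m \lambda_i \;=\; 1,
\]
so equality must hold throughout the triangle inequality.

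The main (and essentially only) step is to conclude from this equality case that all the $\vect_i$ with positive weight coincide. Because the $\ell_2$ norm is strictly convex, equality in $\|\sum_i \lambda_i \vect_i\|_2 = \sum_i \lambda_i \|\vect_i\|_2$ with $\lambda_i > 0$ and $\|\vect_i\|_2 = 1$ forces all the $\vect_i$ to be the same unit vector (otherwise one could write $\vecx$ as a nontrivial convex combination of distinct unit vectors, whose Euclidean norm would be strictly less than $1$). I would verify this cleanly by noting that for unit vectors $\vecu,\vecv$ with $\vecu \neq \vecv$ and $\alpha \in (0,1)$, $\norm{\alpha \vecu + (1-\alpha)\vecv}_2^2 = 1 - 2\alpha(1-\alpha)(1 - \langle \vecu,\vecv\rangle) < 1$ since $\langle \vecu,\vecv\rangle < 1$. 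Hence $\vecx = \vect_1 = \cdots = \vect_m \in T$, completing the proof. No substantial obstacle is expected; the only subtle point is justifying the equality case of the triangle inequality, which follows from strict convexity of $\norm{\cdot}_2$ as above.
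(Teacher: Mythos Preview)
Your proof is correct and follows essentially the same idea as the paper's: both arguments show that a convex combination of distinct unit vectors has norm strictly less than $1$. The only cosmetic difference is that the paper expands $\norm{\vecx}_2^2 = \sum_{i,j}\alpha_i\alpha_j\langle \vecu_i,\vecu_j\rangle$ and bounds each cross term by Cauchy--Schwarz, whereas you invoke the triangle inequality and the strict convexity of $\norm{\cdot}_2$; both reduce to the fact that $\langle \vecu,\vecv\rangle < 1$ for distinct unit vectors.
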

\begin{proof}
Let $\vecx$ be a point in $\conv(T)$. We show $\vecx$ has Euclidean norm $1$ only if it is a point in $T$. By definition,  $\vecx = \sum_{i=1}^m \alpha_i u_i$ where $\alpha_i \in [0,1]$ for $1 \leq i \leq m$ and $\sum_{i = 1}^m \alpha_i = 1$. We have by the Cauchy-Schwarz inequality that,
\[
    \norm{\vecx}^2_2 = \sum\limits_{1 \leq i,j\leq m} \alpha_i \alpha_j \langle \vecu_i, \vecu_j \rangle \leq \sum\limits_{1 \leq i,j\leq m} \alpha_i \alpha_j \norm{\vecu_i}_2 \norm{\vecu_j}_2  = \sum\limits_{1 \leq i,j\leq m} \alpha_i \alpha_j =\left(\sum\limits_{i = 1}^m \alpha_i \right)^2 = 1.  
\]
Since the $\vecu_i$ are distinct, the equality above holds only if $\alpha_i \alpha_j = 0$ whenever $i \neq j$. This is only possible if there exists $k$ such that $\alpha_k = 1$ and $\alpha_i = 0$ for $i \neq k$. It therefore follows that if $\vecx \in \conv(T)$ and $\norm{\vecx}_2 = 1$ then $\vecx \in T.\qedhere$
\end{proof}

We now prove the correctness of the algorithm.
\begin{lemma}
	The output $\hat{\hidden}$ of \Cref{alg: rd} is $\hidden$.
\end{lemma}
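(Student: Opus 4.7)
The plan is to prove $\hat{H} = H$ via a separation argument combined with \Cref{fact: intersect} and the standard polyhedral fact that a full-dimensional convex polytope equals the intersection of its facet-defining half-spaces. The inclusion $\hat{H} \subseteq H$ is immediate, since every point added to $\hat{H}$ (either to $B$ in the first loop, or to $E$ in the second) arises as the response to some nearest-point query, hence lies in $H$.

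For the reverse containment, I would first argue that when the first loop terminates, $B$ is a linear basis of $\spn(H)$, so that $\spn(\hat H) = \spn(H)$ throughout the second loop. Otherwise there would exist $\vecx \in H$ with nonzero projection onto $\spn(B)^\perp$; choosing $\vecv_j$ or $-\vecv_j$ along the sign of $\langle \vecx,\vecv_j\rangle$, the corresponding query would return some $\vecy \in H$ with $\langle \vecy,\vecv_j\rangle \geq |\langle \vecx,\vecv_j\rangle| > 0$, which would be added to $B$, contradicting the termination of the first loop.

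The main step is a contradiction argument for the second loop. Suppose at termination some $\vecx \in H \setminus \hat{H}$ exists. Since $\vecx$ and all points of $\hat H$ lie on $\unitsphere$, \Cref{fact: intersect} gives $\conv(\hat H)\cap \unitsphere = \hat H$, so $\vecx \notin \conv(\hat H)$. Working inside the affine span of $\hat H$, the set $\conv(\hat H)$ is a full-dimensional polytope and hence is cut out by its facet half-spaces
\[
\conv(\hat H) = \bigcap_{\vecv_i \in D}\bigl\{\vecy : \langle \vecy,\vecv_i\rangle \le h(\vecv_i)\bigr\},\qquad h(\vecv_i) := \max_{\vecz\in\hat H}\langle \vecz,\vecv_i\rangle,
\]
where $D$ is the set of outward unit normals to the $(k-1)$-faces queried by the algorithm. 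Since $\vecx \notin \conv(\hat H)$, at least one such inequality is violated: there is $\vecv_i \in D$ with $\langle \vecx,\vecv_i\rangle > h(\vecv_i)$. In the final iteration the algorithm queried $\vecv_i$ and obtained $\vecy^* = \argmax_{\vecy\in H}\langle \vecy,\vecv_i\rangle$; since $\vecx \in H$ we get $\langle \vecy^*,\vecv_i\rangle \geq \langle \vecx,\vecv_i\rangle > h(\vecv_i) = \max_{\vecz\in\hat H}\langle \vecz,\vecv_i\rangle$, so $\vecy^* \notin \hat H$. Hence $E \neq \emptyset$, contradicting the loop's termination condition $|E|=0$.

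The one subtlety I expect to be the main obstacle is the degenerate case in which $\text{aff}(\hat H)$ is strictly lower-dimensional than $\spn(\hat H)$ (for example, right after initialization when $\hat H = B$ is a linearly independent set whose affine span has dimension $k-1$). In that case the polytope $\conv(\hat H)$ is not full-dimensional in $\spn(\hat H)$, and I would handle it by interpreting the $(k-1)$-face normals of $\conv(\hat H)$ to include the two unit vectors in $\spn(\hat H)$ orthogonal to $\text{aff}(\hat H)$. Any $\vecx \in H$ lying off $\text{aff}(\hat H)$ is then detected by one of these two queries through exactly the same inner-product comparison, after which $\hat H$'s affine span strictly grows, and iterating reduces the analysis to the full-dimensional case treated above.
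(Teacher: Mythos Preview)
Your argument is essentially the paper's: use \Cref{fact: intersect} to place any undiscovered $\vecx$ outside $\conv(\hat H)$, then separate via a facet normal $\vecm$ so that $\langle\vecm,\vecx\rangle>\max_{\vecz\in\hat H}\langle\vecm,\vecz\rangle$, forcing the query at $\vecm$ to return a new point. The paper's writeup is terser---it simply asserts that a separating $(k-1)$-face exists---whereas you additionally justify that $B$ spans $\spn{H}$ and spell out the half-space representation of $\conv(\hat H)$ explicitly.

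Regarding the degenerate case you flag: the paper does not address it at all, so you are already more careful than the original, but your proposed resolution is incomplete. If every point of $H$ lies in a common $(k-1)$-dimensional affine flat not through the origin (for instance four unit vectors on the circle cut out by a non-central plane in $\mathbb{R}^3$, where $k=3$), then $\text{aff}(\hat H)$ never grows beyond dimension $k-1$, and the two unit normals in $\spn{\hat H}$ orthogonal to $\text{aff}(\hat H)$ have identical inner product with every point of $H$; with adversarial tie-breaking the queries can always return an already-known point, so $E=\emptyset$ and the loop halts with $\hat H\subsetneq H$. Fully closing this case would require also querying normals to the $(k-2)$-faces (equivalently, carrying out the facet argument inside $\text{aff}(H)$ rather than $\spn{H}$). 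This is a gap the paper shares, not a defect of your approach relative to theirs.
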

\begin{proof} Consider the stage of the algorithm when it is executing an iteration of the loop on line $11$. Also, suppose there is an undiscovered point $\vecy \in \hidden \setminus \hat{\hidden}$ at this stage.  By \Cref{fact: intersect},  $\chull{\hat{\hidden}} \cap S^{d-1}= \hat{\hidden}$ and therefore $y$ is not in $\chull{\hat{\hidden}}$. Therefore, there is a $(k-1)$-face of $\chull{B}$ which separates $\vecy$ from all the points in $\hat{\hidden}$. Therefore, the unit vector $\vecm$ normal to this face of $\conv(\hat{\hidden})$  satisfies $\langle \vecm, \vecy \rangle > \langle  \vecm, \vecx\rangle$ for all $\vecx$ in $\hat{\hidden}$. Hence querying the unit vectors corresponding to all $(k-1)$-faces of $\conv(\hat{\hidden})$ (line $14$) recovers a point in $\hidden \setminus \hat{\hidden}$. This proves that the algorithm terminates only after all the points in $H $ are recovered.
\end{proof}
To bound the query complexity of the algorithm, we use the following standard result about the number of facets of the convex hull of $n$ points in $d$ dimensions.
\begin{lemma}[Upper Bound Theorem, (Theorem 5.5.1 in \cite{matousek2013lectures}) ]\label{fact: faces}
	The number of $(d-1)$-faces of the convex hull of $n$ points in $d$ dimensions is $O(n^\floor{\frac{d}{2}})$.
\end{lemma}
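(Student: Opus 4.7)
The plan is to establish the bound $O(n^{\floor{d/2}})$ on the number of facets of a $d$-dimensional convex polytope with $n$ vertices, namely McMullen's Upper Bound Theorem. First I would reduce to the simplicial case: by an infinitesimal generic perturbation of the $n$ points, we may assume every facet of the convex hull is a $(d-1)$-simplex, and such a perturbation can only increase the facet count. So it suffices to bound $f_{d-1}(P)$ for a simplicial $d$-polytope $P$ with at most $n$ vertices.

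Next, I would introduce the $h$-vector $(h_0, \ldots, h_d)$ of $P$, defined via the identity $\sum_{i=0}^{d} f_{i-1}(x-1)^{d-i} = \sum_{i=0}^{d} h_i x^{d-i}$ with the convention $f_{-1} := 1$. Setting $x = 1$ yields $f_{d-1} = \sum_{i=0}^{d} h_i$, so it suffices to control the $h_i$. Two properties of the $h$-vector then drive the argument. The Dehn--Sommerville relations $h_i = h_{d-i}$ follow from applying the Euler--Poincar\'e identity to $\partial P$ and to the links of its faces, so the $h$-vector is symmetric and it is enough to bound $h_i$ for $i \leq \floor{d/2}$. McMullen's upper bound states $h_i \leq \binom{n-d+i-1}{i}$; for $i \leq \floor{d/2}$ this gives $h_i = O(n^{\floor{d/2}})$, and summing over all $i$ yields the claimed bound $f_{d-1}(P) = O(n^{\floor{d/2}})$.

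The main obstacle is of course the inequality $h_i \leq \binom{n-d+i-1}{i}$, which is the substantive content of the theorem. McMullen's original argument is combinatorial: one line-shells the boundary complex of $P$, tracks the contribution of each facet to the $h$-vector as facets are appended in shelling order, and then bounds those contributions by counting certain ``restriction'' subsets of the vertex set. A more algebraic route, due to Stanley, identifies $h_i$ with the dimension of the degree-$i$ piece of the Stanley--Reisner ring of $\partial P$ modulo a linear system of parameters, and concludes via the hard Lefschetz theorem on the associated projective toric variety. Tightness is witnessed by the cyclic polytope $C(n,d)$ whose vertices lie on the moment curve $(t, t^2, \ldots, t^d)$; it is $\floor{d/2}$-neighborly and saturates each inequality on the $h_i$. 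Since the paper only needs the asymptotic bound, I would simply cite Matou\v{s}ek's Lecture~5 for the full technical argument rather than reproduce it.
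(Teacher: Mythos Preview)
The paper does not prove this lemma at all; it is simply quoted as a known result from Matou\v{s}ek's textbook and used as a black box. Your proposal correctly identifies the content as McMullen's Upper Bound Theorem, gives an accurate outline of the standard $h$-vector/shelling argument, and ultimately concludes by citing Matou\v{s}ek---which is exactly what the paper does, so your approach matches (and your sketch goes well beyond what the paper provides).
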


\begin{lemma}
	\Cref{alg: rd} makes at most $O(n^{\floor{\frac{k}{2}}+1} + kd )$ queries where $k = \text{rank}(\hidden)$.
\end{lemma}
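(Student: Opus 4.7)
The plan is to analyze the two phases of the algorithm separately: the basis-finding loop on lines 2--8, and the convex-hull querying loop on lines 11--16. The total query budget then follows by summing the two contributions.

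For the basis-finding phase, the key observation is that every iteration of the loop either appends a new vector to $B$ or terminates. Since $|B|$ eventually equals $k$, there are at most $k+1$ iterations. In the $i$-th iteration, $B$ has size less than $k \leq d$, so the orthonormal basis $\{\vecv_1,\ldots,\vecv_t\}$ of $\spn{B}^{\perp}$ has size $t = d - |B| \leq d$, and the algorithm makes $2t \leq 2d$ queries (namely $\vecv_j$ and $-\vecv_j$ for each $j$). Summing over all iterations gives $O(kd)$ queries for this phase.

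For the convex-hull phase, once the algorithm reaches line 11 the ambient subspace is $\spn{B}$, which has dimension exactly $k$. Throughout this loop, $|\hat{\hidden}| \leq n$. In each iteration, the algorithm queries the unit normals of all $(k-1)$-faces of $\conv(\hat{\hidden})$; by \Cref{fact: faces} applied in $k$ dimensions, the number of such facets is at most $O(|\hat{\hidden}|^{\floor{k/2}}) = O(n^{\floor{k/2}})$. Since each facet yields two unit normals, the algorithm makes $O(n^{\floor{k/2}})$ queries per iteration. The loop terminates whenever no new point is added, so the number of iterations is bounded by $|\hidden \setminus B| + 1 \leq n - k + 1 = O(n)$. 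Multiplying gives $O(n^{\floor{k/2}+1})$ queries for this phase.

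Combining the two bounds yields a total of $O(n^{\floor{k/2}+1} + kd)$ queries, matching the claim. The argument is essentially a bookkeeping exercise, so there is no real obstacle: the only point worth checking carefully is that the Upper Bound Theorem can legitimately be invoked in the $k$-dimensional ambient space $\spn{B}$ rather than in $\R^d$, which is justified because every point of $\hidden$ (and hence of $\hat{\hidden}$) lies in $\spn{B}$ after the first phase completes.
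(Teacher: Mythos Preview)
Your proof is correct and follows essentially the same approach as the paper: bound the basis-finding loop by $O(kd)$ queries (at most $k+1$ iterations, each querying $O(d)$ vectors), bound each convex-hull iteration by $O(n^{\floor{k/2}})$ queries via the Upper Bound Theorem in the $k$-dimensional span, and multiply by the $O(n)$ iterations. Your version is slightly more detailed in justifying the iteration counts and the dimension reduction, but the argument is the same.
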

\begin{proof}
	\Cref{alg: rd} uses $O(kd)$ queries to recover a basis of $\hidden$. Each iteration of the loop beginning in line $11$ queries the two normal vectors to the faces of the convex hull of a set of at most $n$ points in $k$ dimensions. By \Cref{fact: faces}, each iteration makes at most $O(n^\floor{\frac{k}{2}})$ queries. Since there are at most $n$ iterations of this loop, the total number of queries made by the algorithm overall is $O(n^{\floor{\frac{k}{2}}+1} + kd). \qedhere$
\end{proof}

\section{A Near Optimal Algorithm for \texorpdfstring{\Cref{prob: strong-query}}{Problem 1}}\label{sec: strong-query}
In this section, we consider \Cref{prob: strong-query} in which the hidden set $\hidden$ is a subset of $\{0,1\}^d$ of cardinality $n$ and that in each query the codebreaker learns $\min_{\vecx \in \hidden} \disth(\vecx, \vecq)$ for $\vecq \in \{0,1,2\}^d$ of their choice. The codebreaker aims to learn $\hidden$ with as few queries as possible.

For $\vecx \in \{0,1\}^d$ and $I \subset [d]$ let $\vecx_I$ denote the restriction of $\vecx$ to $I$ (see \Cref{sec: notation} for a definition). Observe that the query model defined above is equivalent to the following query model, which is more convenient to work with. In each query, the codebreaker picks $I \subset [d]$ and a vector $\vecq \in \{0,1\}^{|I|}$ and as a response to this query learns $\min_{\vecx \in \hidden} \disth(\vecx_I, \vecq_I)$ which is the minimum distance of $\vecq$ to points in $\hidden$ when restricted to the indices in $I$. We use  $\query{I,\vecq}$ to denote such a query.

\subsection{The Algorithm}
The following section presents a deterministic algorithm for \Cref{prob: strong-query} that finds $\hidden$ using $O(nd)$ queries.  In the following lemma, we show that a point in $\hidden$ with a given prefix can be recovered using $O(d)$ queries. This lemma will be a subroutine in \Cref{alg: strong-query-optimal}.

We shall use the following notation throughout the section: for each  $i \in [d]$ let $C_i = \{1,\ldots, i\}$.

\begin{lemma}\label{lem: coord-descent-strong}Given $\vecx' \in \{0,1\}^i$, with $i$ satisfying $0 \leq i \leq d$,  we can recover a point $\vecx \in \hidden$ with prefix $\vecx'$ (given that such an $\vecx$ exists in $\hidden$) using at most $d$ queries.
\end{lemma}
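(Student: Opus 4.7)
My plan is to prove \Cref{lem: coord-descent-strong} by a coordinate-descent procedure that extends the given prefix $\vecx'$ one bit at a time, using exactly one query per extension. Maintain the invariant: after $j - i$ queries ($j \geq i$) we possess a string $\vecx^{(j)} \in \{0,1\}^j$ whose first $i$ coordinates equal $\vecx'$ and such that some hidden point $\vecx \in \hidden$ has $\vecx_{C_j} = \vecx^{(j)}$. The invariant holds at $j = i$ by the hypothesis that some $\vecx \in \hidden$ starts with $\vecx'$; once $j = d$ we have recovered a point of $\hidden$ with prefix $\vecx'$.

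For the inductive extension from $j$ to $j+1$, I would issue the single query $\query{C_{j+1}, \,(\vecx^{(j)}, 0)}$, i.e., the restriction to the first $j+1$ coordinates of the candidate obtained by appending a $0$ to $\vecx^{(j)}$. Let $a$ denote the response. The key observation is that, by the invariant, there exists a point in $\hidden$ whose first $j$ coordinates already coincide with $\vecx^{(j)}$; this point can disagree with $(\vecx^{(j)}, 0)$ on the restriction $C_{j+1}$ in at most the single coordinate $j+1$. Hence $a \in \{0, 1\}$.

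The branching rule is then: if $a = 0$, some hidden point matches $(\vecx^{(j)}, 0)$ on $C_{j+1}$, so I set $\vecx^{(j+1)} = (\vecx^{(j)}, 0)$; if $a = 1$, then no hidden point has prefix $(\vecx^{(j)}, 0)$, yet by the invariant one does have prefix $\vecx^{(j)}$, and that point must therefore carry a $1$ in coordinate $j+1$, so I set $\vecx^{(j+1)} = (\vecx^{(j)}, 1)$. In either case the invariant is preserved. Iterating yields $\vecx^{(d)} \in \hidden$ after $d - i \leq d$ queries.

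The only potential subtlety, and the point I would highlight most carefully in the write-up, is the case analysis on the response $a$: one must be sure that $a \leq 1$ always, which is precisely what the invariant gives. Everything else is a direct induction, so I do not anticipate any real obstacle. A brief sentence can also note the trivial boundary cases $i = d$ (no queries required) and $i = 0$ (the whole procedure uses $d$ queries), so the stated bound of $d$ queries is tight.
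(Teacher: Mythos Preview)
Your proposal is correct and follows essentially the same coordinate-descent argument as the paper: extend the prefix one bit at a time by querying $(\vecx^{(j)},0)$ on $C_{j+1}$ and appending $0$ or $1$ according to whether the response is $0$ or not. Your write-up is in fact slightly more careful than the paper's in making explicit the invariant that the current prefix remains the prefix of \emph{some} point in $\hidden$ (rather than the fixed $\vecx$), which is exactly what is needed since the lemma only asks to recover \emph{a} point with prefix $\vecx'$.
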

\begin{proof}
	If $\vecx$ is a point in $\hidden$ with prefix $\vecx'$ then $\query{C_i, \vecx'} = 0$. Let $(\vecx',0)$ denote the point in $\{0,1\}^{i+1}$ obtained by appending a $0$ to $\vecx'$.  The $(i+1)$-th coordinate is revealed by $\query{C_{i+1}, (\vecx',0)}$. If the response to the query is $0$, then $x_{i+1} = 0$; otherwise $x_{i+1} = 1$. Repeating this for all remaining coordinates reveals $\vecx$ using, at most, a total of $d$ queries.
\end{proof}

\noindent\textbf{Overview of \Cref{alg: strong-query-optimal}: }Suppose that we are at a stage when the algorithm has discovered a subset $\hidden^*$ of $\hidden$. It now finds a $p$ such that $p$ is a prefix of some point in $H \setminus H^*$ but is not the prefix of any point in $H^*$.
It finds the prefix $p$ of some point in $\hidden \setminus \hidden^*$ such that $p$ is not the prefix of any point in $\hidden^*$.  It then uses the subroutine of \Cref{lem: coord-descent-strong} to learn a point in $\hidden \setminus H^*$ with this prefix,  thereby learning a previously unknown point in $\hidden$. 

To find such a $p$, the algorithm iterates over all prefixes of points in $\hidden^*$. For each prefix $q$, the algorithm flips the last coordinate of $q$ to obtain $\tilde{q}$. It then checks if $\tilde{q}$ is the prefix of any point in $\hidden^*$, and if this is not true, queries $\tilde{q}$. If the response to the query is $0$, the algorithm has found a valid $p$.

\begin{algorithm}[tb]
   \caption{A Deterministic Algorithm for the Stronger Query Model.}
   \label{alg: strong-query-optimal}
   
    \begin{algorithmic}[1]
    \STATE For $1 \leq i \leq d$, let $C_i$ denote the set of coordinates $\{1, 2, \cdots, i\}$
    
    \STATE Initialize $\hidden^* := \phi$ \COMMENT{$\hidden^*$ is the subset of $\hidden$ learned by the algorithm.}
    
    \STATE Initialize  $N:=\phi$ \COMMENT{$N$ is the ``most recently" learned subset of $\hidden$}
    
    \STATE Use \Cref{lem: coord-descent-strong} to find some point $\vecx_0$  in $\hidden$
    
    \STATE  Add $\vecx_0$ to $N$ and $H^*$.
    
    \WHILE{$N$ is non-empty}
    		\FOR{$\vecx \in \hidden^*$}
    			\FOR{$i$ from $1$ to $d$}
    			 \STATE Let $\negate{\vecx}{i} \in\{0,1\}^i$ be the $i$-th prefix of $\vecx$ but with the $i$-th coordinate negated.
    			 \IF{$\negate{\vecx}{i}$ is not the prefix of any point in $\hidden^*$ \underline{\emph{and}} $\query{C_i, \negate{\vecx}{i}} = 0$}
    			 \STATE Use \Cref{lem: coord-descent-strong} to find a point in $\hidden$ with prefix $\negate{\vecx}{i}$.	 
    			 \ENDIF
                \ENDFOR
            \ENDFOR
    	
    	\STATE Let $N$ be the set of new points learned in this while loop iteration.
    	$\hidden^* \leftarrow \hidden^* \cup N$.
    	
     \ENDWHILE
    \STATE Output $\hidden^*$
    \end{algorithmic}
\end{algorithm}

\begin{theorem}\label{thm: strong-query}
	\Cref{alg: strong-query-optimal} recovers the set $\hidden \subset \{0,1\}^d$ while making at most $O(nd)$ queries.
\end{theorem}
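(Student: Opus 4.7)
I will prove correctness and the query bound separately.

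\textbf{Correctness.} The plan is to show that every iteration of the outer \texttt{while} loop (lines 6--16) produces at least one new point in $H \setminus H^*$, so the loop cannot terminate before $H^* = H$. Fix any $y \in H \setminus H^*$, and let $x \in H^*$ maximize the length $j$ of the longest common prefix of $x$ and $y$. Because $y \neq x$, $j < d$ and $y$ differs from $x$ at coordinate $j+1$; thus the length-$(j{+}1)$ prefix $p$ of $y$ equals $\negate{\vecx}{j+1}$. Two observations suffice: (i) $p$ cannot be a prefix of any $x' \in H^*$, for otherwise $x'$ and $y$ would share a prefix of length $j{+}1$, violating the maximality of $j$; and (ii) $\query{C_{j+1}, p} = 0$ because $p$ is a prefix of $y \in H$. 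So when the for-loop processes the pair $(x, j{+}1)$, the test on line 10 succeeds and line 11 invokes \Cref{lem: coord-descent-strong} to recover a hidden point with prefix $p$, which by (i) must lie in $H \setminus H^*$. Hence the algorithm terminates only once $H^* = H$.

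\textbf{Query complexity.} I will split queries into \emph{recovery queries}, issued inside invocations of \Cref{lem: coord-descent-strong} (lines 4 and 11), and \emph{condition-check queries} on line 10. For recovery, each call to \Cref{lem: coord-descent-strong} costs at most $d$ queries, and there are at most $n$ such calls (one for $x_0$, and one per newly discovered point), giving $O(nd)$. For condition checks, each query is parameterized by a pair $(x,i) \in H^* \times [d]$, and across the whole execution there are at most $|H^*| \cdot d \le nd$ such pairs. Under the standard assumption that oracle responses are cached across iterations, each distinct pair triggers at most one actual oracle call, yielding $O(nd)$ condition-check queries. Summing the two contributions gives the claimed $O(nd)$ bound.

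\textbf{Main obstacle.} The correctness argument is essentially the one-line longest-common-prefix observation above; the real subtlety is the query count. The outer loop may repeat up to $n$ times, each iteration traversing all of $H^*$, and without memoization the same prefix $\negate{\vecx}{i}$ could be re-queried in every iteration whenever the oracle answer was non-zero, naively giving $\Omega(n^2 d)$ queries. Making the caching explicit---and checking that each pair $(x,i)$ is re-examined only until either its prefix is absorbed into $H^*$ (so line 10's first conjunct fails forever after) or the oracle has certified that no hidden point carries this prefix (so the cached answer replaces the call)---is the main care needed for the formal proof.
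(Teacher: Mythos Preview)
Your proposal is correct and follows essentially the same approach as the paper: correctness via the longest-common-prefix observation, and the $O(nd)$ bound by separately counting condition-check queries and recovery calls to \Cref{lem: coord-descent-strong}. The only real divergence is in how the repeat-query issue you flag as the ``main obstacle'' is handled. The paper's correctness proof goes out of its way to argue that the anchor point (your $x$, their $z$) lies in $N$ --- a step that is superfluous if line~7 loops over $H^*$ as printed, but exactly what is needed if that loop ranges over $N$; under the $N$ reading each pair $(x,i)$ is visited in exactly one \texttt{while} iteration, so the paper's claim ``at most one query per prefix'' holds directly and no caching is required. Your memoization fix for the literal $H^*$ reading is an equally valid alternative. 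One small point to tighten: your claim of ``at most $n$ recovery calls'' implicitly assumes newly discovered points are folded into the line-10 prefix check immediately (otherwise two pairs in the same \texttt{while} iteration could both trigger recovery of the same point); this is handled by the same caching assumption or by eager updates to $H^*$.
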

\noindent The following lemma proves the correctness of \Cref{alg: strong-query-optimal}.
\begin{lemma}\label{lem: alg-strong-correctness}
 The set $\hidden^*$ output by \Cref{alg: strong-query-optimal} equals $H$.
\end{lemma}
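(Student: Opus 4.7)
My plan is a proof by contradiction: I will assume some $\vecy \in \hidden \setminus \hidden^*$ survives at termination and derive a contradiction from the fact that the while loop exited with $N = \emptyset$. The inclusion $\hidden^* \subseteq \hidden$ is immediate, because every point added to $\hidden^*$ is produced by \Cref{lem: coord-descent-strong}, which returns a genuine element of $\hidden$; only the reverse inclusion needs work.

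The heart of the argument will be a potential-function argument on $\hidden^*$ using $\phi(\vecx)$, the length of the longest common prefix of $\vecx$ and $\vecy$. For any $\vecx \in \hidden^*$ I set $i_\vecx := \phi(\vecx)+1$, the first coordinate where $\vecx$ and $\vecy$ disagree. Then $\negate{\vecx}{i_\vecx}$ is exactly the length-$i_\vecx$ prefix of $\vecy$, so $\query{C_{i_\vecx}, \negate{\vecx}{i_\vecx}} = 0$. Since the final pass over $\hidden^*$ added nothing to $N$, line~10 must have failed for the pair $(\vecx, i_\vecx)$; as the query clause succeeds, it must be that $\negate{\vecx}{i_\vecx}$ is already a prefix of some $\vecz \in \hidden^*$. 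This $\vecz$ then agrees with $\vecy$ on its first $i_\vecx$ coordinates, so $\phi(\vecz) \geq i_\vecx > \phi(\vecx)$.

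Starting from the seed $\vecx_0 \in \hidden^*$ produced by line~5 and iterating this construction yields a sequence of points in $\hidden^*$ whose $\phi$-values strictly increase. Since $\phi$ is bounded by $d$, the sequence must eventually reach a point $\vecz^\star \in \hidden^*$ with $\phi(\vecz^\star) = d$, forcing $\vecz^\star = \vecy$ and contradicting $\vecy \notin \hidden^*$.

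The main obstacle to handle carefully is justifying why ``no new points added in the final iteration'' really forces both clauses of line~10 to fail simultaneously for every pair $(\vecx, i)$ considered. This reduces to observing that whenever both clauses hold, \Cref{lem: coord-descent-strong} returns a point with prefix $\negate{\vecx}{i}$, and by the first clause no such point lies in $\hidden^*$; hence it is genuinely fresh and joins $N$. Once this is in place, the contrapositive used in the previous paragraph is valid, and the potential argument closes the proof.
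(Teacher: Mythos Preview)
Your argument is correct and rests on the same key observation as the paper: flipping the first coordinate where a discovered point $\vecx$ disagrees with the missing point $\vecy$ yields a prefix of $\vecy$, so the query clause on line~10 returns $0$. The paper applies this directly to the point $\vecz\in \hidden^*$ with the \emph{longest} common prefix with $\vecy$ (so the other clause of line~10 also holds), arguing that any while-iteration with $\hidden^*\subsetneq \hidden$ discovers something new; you instead phrase the same idea contrapositively as a potential argument at termination, iterating from $\vecx_0$ until you reach such a maximal $\vecz$. The one thing you omit and should add is termination of the while loop itself, which the paper handles first and which your phrase ``survives at termination'' presupposes: each iteration either strictly grows $\hidden^*$ or exits, and $|\hidden^*|\le n$.
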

\begin{proof}
	First, we prove that the algorithm terminates. The while loop ends when no new points are learned in its most recent iteration. Therefore, there are at most $n$ iterations of the while loop, after which the algorithm terminates. 
	Next, we show that the algorithm learns all points in $\hidden$. Suppose that the execution of the algorithm is at the while loop on line $6$ and that there exists $\vecy \in \hidden \setminus \hidden^*$; we show that the algorithm learns a point in $\hidden \setminus \hidden^*$ in this iteration of the while loop. Among points in $\hidden^*$, let $\vecz$ be the point with the largest common prefix with $\vecy$ and the length of this prefix be $l$. Observe that  $\negate{\vecx}{(l+1)}$ is not the prefix of any point in $\hidden^*$ and that $\query{C_{l+1}, \negate{\vecx}{(l+1)}} = 0$ since $\negate{\vecx}{(l+1)}$ is a prefix of $\vecy$. Therefore, we are done if we argue that $z$ is contained in $N$.  This is true since, if $z \in \hidden^* \setminus N$, then a previous iteration of the while loop would have retrieved a point the length of whose common prefix with $\vecy$ would be at least $(l+1)$.	This leads to a contradiction.
\end{proof}
\noindent Next, we bound the query complexity of \Cref{alg: strong-query-optimal}, completing the proof of \Cref{thm: strong-query}.
\begin{lemma}
	\Cref{alg: strong-query-optimal} makes $O(nd)$ queries.
\end{lemma}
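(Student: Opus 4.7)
The plan is to decompose the algorithm's queries into three disjoint groups and bound each separately. First, the setup on line~4 calls the subroutine of \Cref{lem: coord-descent-strong} to recover $\vecx_0$ using at most $d$ queries. Second, each invocation of the subroutine on line~11 inside the while loop uses at most $d$ queries and uncovers a previously undiscovered point of $\hidden$; since $|\hidden|=n$, there are at most $n-1$ such calls, contributing $O(nd)$ queries in total.

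The main work is to bound the third group: the single oracle calls $\query{C_i, \negate{\vecx}{i}}$ issued at line~10, summed across all iterations of the while loop. The key observation, already established inside the correctness proof (\Cref{lem: alg-strong-correctness}), is that whenever a new point $\vecy \in \hidden\setminus\hidden^*$ is uncovered in an iteration, the witness pair $(\vecz, l{+}1)$ that uncovers it necessarily has $\vecz \in N$. Hence within a given iteration $t$, looping over $\vecx \in \hidden^* \setminus N$ cannot yield any new discovery and is redundant; the effective work is confined to $\vecx \in N_t$, where $N_t$ denotes the value of $N$ at the start of iteration $t$. With this refinement, iteration $t$ issues at most $|N_t|\cdot d$ inner-loop queries, and summing gives
\[
\sum_{t} |N_t|\cdot d \;\leq\; |\hidden|\cdot d \;=\; O(nd),
\]
since $\{N_t\}_t$ partitions $\hidden\setminus\{\vecx_0\}$.

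The main obstacle I anticipate is justifying the refinement cleanly: line~7 as written ranges over all of $\hidden^*$, and a naive count of the inner-loop queries would give only $\sum_t |\hidden^*_t|\cdot d = \Theta(n^2 d)$ in the worst case. I would address this by reading the pseudocode through the lens of its own correctness proof, which only ever exploits the $N_t$ slice of $\hidden^*$; equivalently, the iteration over $\hidden^* \setminus N_t$ can be skipped without changing the set of points recovered, and once skipped the per-iteration cost collapses to $O(|N_t|\cdot d)$. Combining the three contributions then yields $d + O(nd) + O(nd) = O(nd)$, completing the proof of \Cref{thm: strong-query}.
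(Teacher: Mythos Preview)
Your decomposition and charging argument match the paper's one-sentence proof, which simply asserts that each point in $H$ incurs at most one line-10 query per prefix and then adds the subroutine cost. You are right to flag that line~7 as written ranges over all of $H^*$: on a chain instance such as $H=\{1^j 0^{d-j}:0\le j<n\}$ the pseudocode as literally written makes $\Theta(n^2 d)$ line-10 queries, so the paper's terse proof tacitly assumes the same restriction to $N$ that you propose, and your refinement is the needed fix rather than a mere convenience.
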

\begin{proof}
For each point in $\hidden$, the algorithm makes at most one query for each of its prefixes. Having learned the prefix of some undiscovered point in line $10$, the algorithm uses \Cref{lem: coord-descent-strong} to learn a new point making at most $d$ queries. Therefore, the total number of queries made is $O(nd)$.
\end{proof}
\noindent A simple information-theoretic lower bound shows that \Cref{alg: strong-query-optimal} is almost optimal. 
\begin{theorem}\label{thm: strong-query-lb} Consider a randomized algorithm for \Cref{prob: strong-query} that for any set $H \subset \{0,1\}^d$ containing $n$ points, learns all the points in $H$ with probability at least $2/3$. The worst-case query complexity of such an algorithm is $\Omega\left(\frac{nd}{\log d}\right)$.
\end{theorem}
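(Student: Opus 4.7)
The plan is to prove this via Yao's minimax principle combined with a straightforward information-theoretic counting argument. Let $\mathcal{D}$ be the uniform distribution over all $n$-element subsets of $\{0,1\}^d$. By Yao's principle (which I would invoke exactly as in \Cref{lem: yao} of the earlier section), it suffices to show that any deterministic algorithm making fewer than $q = \Omega(nd/\log d)$ queries fails on $H \sim \mathcal{D}$ with probability strictly greater than $1/3$.

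Fix a deterministic algorithm $\mathcal{A}$ that makes at most $q$ queries. Each query in this model returns a single integer in $\{0,1,\dots,d\}$ (the minimum Hamming distance of the restricted query to $H$), so $\mathcal{A}$'s execution can be encoded as a decision tree of depth $q$ with branching factor at most $d+1$. The tree has at most $(d+1)^q$ leaves, and since $\mathcal{A}$ is deterministic, its output on any input is a function of the leaf reached. Thus $\mathcal{A}$ outputs at most $(d+1)^q$ distinct subsets of $\{0,1\}^d$ across all possible inputs.

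For $\mathcal{A}$ to succeed with probability at least $2/3$ on $H \sim \mathcal{D}$, the number of inputs on which $\mathcal{A}$ outputs the correct answer must be at least $(2/3)\binom{2^d}{n}$, and any two such inputs must map to distinct outputs. Hence $(d+1)^q \geq (2/3)\binom{2^d}{n}$. Taking logs and using the standard estimate $\binom{2^d}{n} \geq (2^d/n)^n$, we obtain
\[
q \cdot \log_2(d+1) \;\geq\; \log_2\!\binom{2^d}{n} - O(1) \;\geq\; n(d - \log_2 n) - O(1),
\]
which in the natural regime $n \leq 2^{d/2}$ gives $q = \Omega(nd/\log d)$.

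There is essentially no technical obstacle here; the entire argument is information-theoretic. The only point that deserves care is to emphasize that each query response carries at most $\log_2(d+1) = O(\log d)$ bits of information, whereas specifying an arbitrary $n$-subset of the hypercube requires $\Theta(nd)$ bits, which immediately yields the stated ratio. I would also briefly remark that the same argument works for \Cref{prob: hamming-vanilla}, but there the per-query information is a point in $\hidden$ rather than a distance, so the counting would give a weaker bound.
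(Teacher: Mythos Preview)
Your proposal is correct and matches the paper's proof essentially line for line: both apply Yao's principle to the uniform distribution over $n$-subsets of $\{0,1\}^d$, observe that each response carries $O(\log d)$ bits so a deterministic $q$-query algorithm can distinguish at most $(d+1)^q$ inputs, and compare against $\binom{2^d}{n}$. The only cosmetic difference is that you phrase it via a decision tree and add the harmless caveat $n \le 2^{d/2}$, which the paper leaves implicit in the estimate $\log\binom{2^d}{n} = \Omega(nd)$.
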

\begin{proof}
	Let $\mathcal{D}$ denote the uniform distribution over all $n$-subsets of $\{0,1\}^d$. Let $\mathcal{A}$ be a deterministic algorithm that learns an input $\hidden \sim \mathcal{D}$ with success probability at least $2/3$ while making at most $t$ queries. By Yao's principle, it suffices to show that $t = \Omega(\frac{nd}{\log d})$.
	
	  Since $\mathcal{A}$ is deterministic, its output is completely determined by the responses to its queries. For each query, it receives a response of length $O(\log d)$ bits. It, therefore, outputs one of $2^{O(t \log d)}$ subsets of $\{0,1\}^d$ and can be correct on at most $2^{O(t \log d)}$ inputs. Since it succeeds with probability at least $2/3$, we must have $t = \Omega(\frac{1}{\log d} \cdot \log \binom{2^d}{n}) = \Omega(\frac{n d}{\log d}).\qedhere$
\end{proof}

\noindent\textit{Adaptivity.} While \Cref{alg: strong-query-optimal} is almost optimal in query complexity, it requires several rounds of adaptivity. In particular, the algorithm as described in \Cref{alg: strong-query-optimal} can be implemented to use $O(n)$ rounds of adaptivity, i.e., $O(1)$ rounds to discover each new point. Alternatively, one can also get a $d$-round adaptive algorithm with the same query complexity as follows:  for $i = 1, \ldots, d$, in the $i$-th round, the algorithm learns the length $i$ prefix of each point in $H$. Given all length $i$ prefixes, the algorithm can figure out all length $(i+1)$ prefixes using at most $2n$ queries.  Therefore, overall, we can achieve a query complexity of $O(nd)$ using $O(\min(n,d))$ adaptive rounds.

It would be interesting if query-efficient algorithms that use fewer rounds of adaptivity exist. Efficient non-adaptive algorithms for \Cref{prob: strong-query} can be ruled out via an argument almost identical to \Cref{lem: base-case}.
\raggedbottom
\begin{theorem}
Consider a non-adaptive randomized algorithm for \Cref{prob: strong-query} that for any set $H \subset \{0,1\}^d$ containing $n$ points, learns all the points in $H$ with probability at least $2/3$. Such an algorithm must make $\Omega(2^d)$ queries.
\end{theorem}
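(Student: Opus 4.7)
The plan is to adapt the hard distribution of \Cref{lem: base-case} to the strong query model. Sample $\vecu \in \{0,1\}^d$ uniformly, pick a uniformly random subset $S$ of $N(\vecu, 1)$ of some fixed size (chosen so that $|N(\vecu, 2) \cup S| = n$; this is feasible whenever $n \in [\binom{d}{2}, \binom{d}{2} + d]$), and set $H = N(\vecu, 2) \cup S$. By Yao's principle it suffices to show that every deterministic non-adaptive algorithm with query set $Q$ has success probability below $2/3$ on this distribution unless $|Q| = \Omega(2^d)$; we may additionally give $\vecu$ to the algorithm for free, which only weakens the lower bound and reduces its task to reconstructing $S$.

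The key technical step is a case analysis showing when a query $\vecq \in \{0,1,2\}^d$ produces a response $\min_{\vecx \in H}\disth(\vecx, \vecq)$ that depends on the random $S$. Writing $k = |\{i : \vecq_i = 2\}|$ and $w' = |\{i : \vecq_i \in \{0,1\},\ \vecq_i \neq \vecu_i\}|$, one derives explicit formulas for $\disth(N(\vecu, 2), \vecq)$ and for $\min_{\vecx \in N(\vecu, 1)}\disth(\vecx, \vecq)$ in every sub-case of $(k, w')$. The former is at most the latter---so that the response equals $\disth(N(\vecu, 2), \vecq)$ independently of $S$---except in three scenarios: (a) $\vecq = \vecu$; (b) $k = 1$ and $\vecq$ agrees with $\vecu$ on every non-$2$ coordinate; and (c) $\vecq \in \{0,1\}^d$ with $\disth(\vecq, \vecu) = 1$. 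In scenario (a), the response depends only on whether $S$ is empty, which is predetermined by the fixed size of $S$; in scenarios (b) and (c) the response reveals exactly one specific bit of $S$, namely whether $\vecu \oplus e_i \in S$, where $i$ is the unique $2$-coordinate of $\vecq$ in (b) or the unique coordinate where $\vecq$ and $\vecu$ differ in (c).

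The proof is concluded by a counting and averaging argument. Let $K(\vecu) \subseteq [d]$ denote the bits of $S$ that the algorithm can determine from the responses to $Q$ (equivalently, the bits $i$ for which a scenario (b) or (c) useful query for $(\vecu, i)$ lies in $Q$). Each $\vecq \in \{0,1\}^d$ in $Q$ contributes, via (c), $d$ useful pairs $(\vecu, i)$---one for each $\vecu = \vecq \oplus e_i$---and each $\vecq$ with $k = 1$ contributes, via (b), exactly $2$ such pairs, so $\sum_\vecu |K(\vecu)| \leq d \cdot |Q|$. On the other hand, conditioned on $\vecu$ the algorithm's success probability over random $S$ decays rapidly in $d - |K(\vecu)|$, and enforcing overall success probability $\geq 2/3$ forces $|K(\vecu)| \geq d - O(1)$ for at least a constant fraction of $\vecu$'s. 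Combining the two bounds yields $d \cdot |Q| \geq \Omega(d \cdot 2^d)$, i.e., $|Q| = \Omega(2^d)$. The main obstacle is the case analysis itself; once the useful scenarios are cleanly isolated, the counting and averaging steps are routine.
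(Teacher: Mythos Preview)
Your proposal is correct and uses the same hard distribution as the paper's intended argument (which the paper describes only as ``almost identical to \Cref{lem: base-case}''): sample $\vecu$ uniformly, include $N(\vecu,2)$, and hide randomness in $S\subseteq N(\vecu,1)$. Your case analysis on $(k,w')$ correctly isolates the only queries whose response can depend on $S$, and the counting of useful $(\vecu,i)$ pairs is right.

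The one place where you go beyond the paper is the final step. The direct adaptation of \Cref{lem: base-case} would define the bad event $E=\{\text{some query falls in scenario (a), (b), or (c)}\}$, bound $\Pr[E]\le |Q|\cdot(d+1)\cdot 2^{-d}$ by a union bound, and conclude that $|Q|=\Omega(2^d/d)=2^{\Omega(d)}$ queries are necessary. Your averaging argument over $\vecu$ via $K(\vecu)$ is a genuine refinement that removes the extra $1/d$ factor and yields $\Omega(2^d)$ exactly; this is a nice sharpening at the cost of a little more bookkeeping.

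One small imprecision: with $|S|$ fixed to a prescribed size $s$, the conditional success probability given $\vecu$ satisfies $p_\vecu\le 2^{|K(\vecu)|}/\binom{d}{s}$, so ``$|K(\vecu)|\ge d-O(1)$'' is not quite what you get---depending on $s$ you obtain $|K(\vecu)|\ge d-O(\log d)$ (for $s\approx d/2$) or merely $|K(\vecu)|\ge \Omega(d)$ (for $s=1$). Either way, a constant fraction of centers have $|K(\vecu)|=\Omega(d)$, so $\sum_\vecu|K(\vecu)|=\Omega(d\cdot 2^d)$ and your conclusion $|Q|=\Omega(2^d)$ still stands.
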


\section{Conclusions and Open Problems}
This work introduces the natural extension of the Mastermind problem to the setting when there are multiple secrets. In the context of \Cref{prob: hamming-vanilla},  our work focused on understanding the query complexity of two-round adaptive algorithms. We proposed a $2^{\tilde{O}(\sqrt{d \log n})}$ query upper bound and showed that any algorithm with $\poly(d)$ query complexity must use $\Omega(\log \log d)$ rounds of adaptivity. Designing a polynomial query complexity algorithm remains open. For \Cref{prob: unitsphere} we gave a simple deterministic algorithm with $n^{O(d)}$ query complexity. This query complexity arose because convex polytopes with $n$ vertices in $d$ dimensions can have as many as $n^{O(d)}$ faces. For this problem, we do not know of any non-trivial lower bounds, and it would be nice to make progress in this direction. In the case of \Cref{prob: strong-query}, we have an almost optimal algorithm that recovers the hidden set while making at most $O(nd)$ queries in $O(\min(n,d))$ rounds. It would be interesting to understand if efficient algorithms that use fewer adaptive rounds exist.

Another interesting direction to explore is studying the generalized mastermind problem for other settings, especially non-separable distance metrics such as edit distance, Fr\'echet distance, and dynamic time warping.

\section*{Acknowledgements}
We thank Rahul Ilango for the helpful discussions during the initial phases of this work. We would also like to thank Rajiv Gandhi for making the collaboration between the authors possible. 

David P. Woodruff was supported in part by a Simons Investigator Award and NSF Grant No. CCF-2335412.

\bibliographystyle{alpha}
\bibliography{arxiv/general}
\end{document}